\documentclass[a4paper,11pt]{article}
%

\usepackage{microtype} 
\usepackage{authblk}
\RequirePackage{footmisc}

\usepackage{graphicx}
\usepackage{xspace}
\usepackage{amsmath,amssymb}
\usepackage[caption=false]{subfig}
\usepackage{xcolor}
\usepackage{hyperref}
\usepackage{amsthm,amsfonts,amssymb,amsmath,array}
\usepackage{fullpage}

\usepackage{thmtools,thm-restate}
\usepackage{cleveref}

\newtheorem{theorem}{Theorem}
\newtheorem{lemma}[theorem]{Lemma}
\newtheorem{definition}[theorem]{Definition}
\newtheorem{corollary}[theorem]{Corollary}

\graphicspath{{./figures/}}

\title{Generalized Sweeping Line Spanners}

\author[1]{Keenan Lee\thanks{Email: klee6239@uni.sydney.edu.au}}
\author[1]{Andr\'e van Renssen\thanks{Email: andre.vanrenssen@sydney.edu.au}}

\affil[1]{University of Sydney, Australia}

\date{}

\newcommand{\etal}{\emph{et~al.}\xspace}

\bibliographystyle{plainurl}

\begin{document}

\maketitle

\begin{abstract}
We present \emph{sweeping line graphs}, a generalization of $\Theta$-graphs. We show that these graphs are spanners of the complete graph, as well as of the visibility graph when line segment constraints or polygonal obstacles are considered. Our proofs use general inductive arguments to make the step to the constrained setting. These same arguments could apply to other spanner constructions in the unconstrained setting, removing the need to find separate proofs that they are spanning in the constrained and polygonal obstacle settings.
\end{abstract}

\section{Introduction}
A \emph{geometric graph} $G$ is a graph whose vertices are points in the plane and whose edges are line segments between pairs of points. Every edge in the graph has weight equal to the Euclidean distance between its two endpoints. The distance between two vertices $u$ and $v$ in $G$, denoted by $\delta_{G}(u,v)$, or simply $\delta(u,v)$ when $G$ is clear from the context, is defined as the sum of the weights of the edges along a shortest path between $u$ and $v$ in $G$. A subgraph $H$ of $G$ is a $t$-spanner of $G$ (for $t \ge 1$) if for each pair of vertices $u$ and $v$, $\delta_{H}(u,v) \le t \cdot \delta_{G}(u,v)$. The smallest $t$ for which $H$ is a $t$-spanner is the \emph{spanning ratio} or \emph{stretch factor} of $H$. The graph $G$ is referred to as the \emph{underlying graph} of $H$ and, unless otherwise specified, this is assumed to be the complete Euclidean graph. The spanning properties of various geometric graphs have been studied extensively in the literature (see~\cite{BS11,NS-GSN-06} for an overview). The spanning ratio of a class of graphs is the supremum over the spanning ratios of all members of that graph class. Since spanners preserve the lengths of all paths up to a factor of $t$, these graphs have applications in the context of geometric problems, including motion planning and optimizing network costs and delays. 

We introduce a generalization of an existing geometric spanner ($\Theta$-graphs) which we call \emph{sweeping line graphs}. We show that these graphs are spanners, also when considering line segment obstacles or polygonal obstacles during their construction. We show this using a very general method that we conjecture applies to other geometric spanners as well, meaning that such proofs do not have to be reinvented for different spanner constructions. 

Clarkson~\cite{C87} and Keil~\cite{K88} independently introduced $\Theta$-graphs as follows: for each vertex $u$, we partition the plane into $k$ disjoint cones with apex $u$. Each cone has aperture equal to $\theta = \frac{2\pi}{k}$ (see Figure~\ref{fig:cones}) and the orientation of these cones is identical for every vertex. The $\Theta$-graph is constructed by, for each cone with apex $u$, connecting $u$ to the vertex $v$ whose projection onto the bisector of the cone is closest to $u$ (see Figure~\ref{fig:closest}). When $k$ cones are used, we denote the resulting $\Theta$-graph as the $\Theta_{k}$-graph. We note that while $\Theta$-graphs can be defined using an arbitrary line in the cone instead of the bisector (see Chapter 4 of \cite{NS-GSN-06}), the bisector is by far the most commonly used construction and the spanning ratios mentioned in the following paragraph only apply when the bisector is used in the construction process. 

\begin{figure}[h]
  \begin{minipage}[b]{0.45\linewidth}
    \centering
    \includegraphics{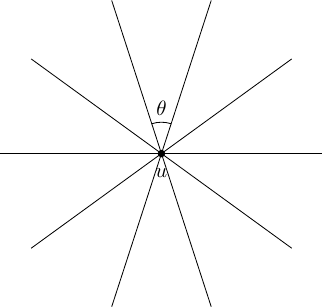}
    \caption{The plane around $u$ is split into 10 cones.}
    \label{fig:cones}
  \end{minipage}
  \hspace{0.5cm}
  \begin{minipage}[b]{0.45\linewidth}
    \centering
    \includegraphics{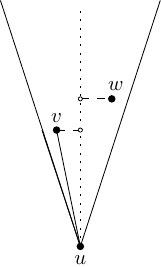}
    \caption{Vertex $v$ is the vertex with the projection closest to $u$.}
    \label{fig:closest}
  \end{minipage}
\end{figure}

Ruppert and Seidel~\cite{RS91} upperbounded the spanning ratio of these graphs (when there are at least 7 cones) by $1/(1 - 2 \sin (\theta/2))$, when $\theta < \pi/3$. Bonichon~\etal~\cite{BGHI10} showed that the $\Theta_6$-graph has a tight spanning ratio of 2, i.e., it has a matching upper and lower bound of 2. Other recent results include a tight spanning ratio of $1 + 2 \sin(\theta/2)$ for $\Theta$-graphs with $4m + 2$ cones, where $m \geq 1$ and integer, and improved upper bounds for the other families of $\Theta$-graphs~\cite{BCMRV16}. When there are fewer than 6 cones, most inductive arguments break down. Hence, it was only during the last decade that upper bounds on the spanning ratio of the $\Theta_5$-graph and the $\Theta_4$-graph were determined: $\sqrt{50 + 22 \sqrt{5}} \approx 9.960$ for the $\Theta_5$-graph~\cite{BMRV2015} and $(1 + \sqrt{2}) \cdot (\sqrt{2} + 36) \cdot \sqrt{4 + 2 \sqrt{2}} \approx 237$ for the $\Theta_4$-graph~\cite{BBCRV2013}. These bounds were recently improved to $5.70$ for the $\Theta_5$-graph~\cite{BHO21} and $17$ for the $\Theta_4$-graph~\cite{BCDS19}. Constructions similar to those demonstrated by El Molla~\cite{E09} for Yao-graphs show that $\Theta$-graphs with fewer than 4 cones are not spanners. In fact, until recently it was not known that the $\Theta_3$-graph is connected~\cite{ABBBKRTV2013}. 

An alternative way of describing the $\Theta$-graph construction is that for each cone with apex $u$, we visualize a line perpendicular to the bisector of the cone that sweeps outwards from the apex $u$. The $\Theta$-graph is then constructed by connecting $u$ to the first vertex $v$ in the cone encountered by the sweeping line as it moves outwards from $u$ (see Figure~\ref{fig:sweepingTheta}).

\begin{figure}[h]
  \begin{minipage}[t]{0.45\linewidth}
    \centering
    \includegraphics{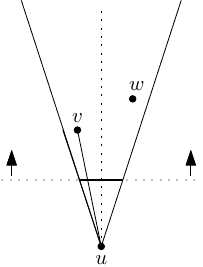}
    \caption{The sweeping line of a cone in a $\Theta$-graph. The sweeping line is a thick black segment inside the cone and grey dotted outside, as vertices outside the cone as ignored.}
    \label{fig:sweepingTheta}
  \end{minipage}
  \hspace{0.5cm}
  \begin{minipage}[t]{0.45\linewidth}
    \centering
    \includegraphics{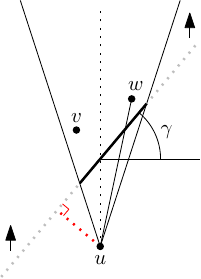}
    \caption{The sweeping line of a cone in the sweeping line graph. For comparison to $\Theta$-graphs, the line through $u$ perpendicular to the sweeping line is shown in red.}
    \label{fig:sweepingLine}
  \end{minipage}
\end{figure}

The \emph{sweeping line graph} generalizes this construction by allowing the sweeping line to be at an angle $\gamma$ to the line perpendicular to the bisector of the cone (see Figure~\ref{fig:sweepingLine}, a more precise definition follows in Section 2). When $\gamma \in [0, \frac{\pi - 3\theta}{2})$ we show that the resulting graph is a spanner whose spanning ratio depends only on $\theta$ and $\gamma$. We note that this angle $\gamma$ implies that the line perpendicular to the sweeping line can be \emph{outside} the cone associated with that sweeping line, which is not supported in the common $\Theta$-graph (using bisectors) or the more general ones described in~\cite{NS-GSN-06}. For example, when 10 cones are used (i.e., $\theta = \pi/5$) the construction in~\cite{NS-GSN-06} allows for an angle $\gamma$ up to $\theta/2 = \pi/10$, while our construction allows the far larger value of $7\pi/20$ and this difference increases as the number of cones increases (i.e., $\theta$ decreases). The ability to rotate the sweeping line has previously been used to define ordered $\Theta$-graphs~\cite{BGM04}, where a sweeping line perpendicular to a cone boundary gave a more efficient construction method. 

Pushing the generalization of these graphs even further, we consider spanners in two more general settings by introducing line segment constraints and polygonal obstacles. Specifically, given a set $P$ of points in the plane, let $S$ be a set of line segments connecting pairs of points in $P$ (not every vertex in $P$ needs to be an endpoint of a line segment in $S$). We refer to $S$ as \emph{line segment constraints}, or simply \emph{constraints}. The set of line segment constraints is planar, i.e. no two constraints intersect properly, but they can share endpoints. Two vertices $u$ and $v$ can see each other if and only if either the line segment $uv$ does not properly intersect any constraint or $uv$ is itself a constraint. If two vertices $u$ and $v$ can see each other, the line segment $uv$ is a \emph{visibility edge} (this notion can be generalized to apply to arbitrary points that can see each other). The \emph{visibility graph} of $P$ with respect to a set of constraints $S$, denoted by Vis$(P,S)$, has $P$ as vertex set and all visibility edges defined by vertices in $P$ as edge set. In other words, it is the complete graph on $P$ minus all edges that properly intersect one or more constraints in $S$. The aim of this generalization is to construct a spanner such that no edge of the spanner properly intersects any constraint. In other words, to construct a spanner of Vis$(P,S)$. 

Polygonal obstacles generalize the notion of line segment constraints by allowing the constraints to be simple polygons instead of line segments, i.e., by excluding the internal area from being accessible. In this situation, $S$ is a finite set of simple polygonal obstacles where each corner of each obstacle is a point in $P$, such that no two obstacles intersect. We assume that each vertex is part of at most one polygonal obstacle and occurs at most once along its boundary, i.e., the obstacles are vertex-disjoint simple polygons, and no polygon contains any point of $P$ in its interior. Note that $P$ can also contain vertices that do not lie on the corners of the obstacles. The definitions of visibility edge and visibility graph are analogous to the ones for line segment constraints.

In the context of motion planning amid obstacles, Clarkson~\cite{C87} showed how to construct a linear-sized $(1+\epsilon)$-spanner of Vis$(P,S)$. Subsequently, Das~\cite{D97} showed how to construct a spanner of Vis$(P,S)$ with constant spanning ratio and constant degree. More recently, the constrained $\Theta_6$-graph was shown to be a 2-spanner of Vis$(P,S)$~\cite{BFRV12} when considering line segment constraints. This result was recently generalized to polygonal obstacles~\cite{polygonallemma}. Most related to this paper is the result by Bose and van Renssen~\cite{BR2019}, who generalized the results from Bose~\etal~\cite{BCMRV16} to the setting with line segment constraints, without increasing the spanning ratios of the graphs.

In this paper, we examine the sweeping line graph in the setting without constraints (or the unconstrained setting), with line segment constraints (or the constrained setting), and with polygonal obstacles. First, we generalize the spanning proof of the $\Theta$-graph given in the book by Narasimhan and Smid~\cite{NS-GSN-06} to the sweeping line graph in the unconstrained setting. Next, we extend the proof to the constrained setting and finally apply it to the case of polygonal obstacles. In all three cases, we prove that the spanning ratio is upperbounded by $\frac{1}{\cos(\frac{\theta}{2} + \gamma) - \sin\theta}$, where $\theta = \frac{2\pi}{k}$ ($k \geq 7$) and $\gamma \in [0, \frac{\pi - 3\theta}{2})$. The most interesting aspect of our approach is that the step from the unconstrained to the constrained and polygonal obstacle settings is very general and could apply to other spanner constructions in the unconstrained setting as well, making it a step towards a general condition of which spanners in the unconstrained setting can be proven to be spanners in the presence of obstacles.

\section{Preliminaries}
Throughout this paper, the notation $|pq|$ refers to the Euclidean distance between $p$ and $q$. We also emphasize that a point can be any point in the plane, while a vertex is restricted to being one of the points in the point set $P$. 

Before we formally define the \emph{sweeping line graph}, we first need a few other notions. A cone is the region in the plane between two rays that emanate from the same point, called the apex of the cone. Let $k \ge 7$ and define $\theta = \frac{2\pi}{k}$. If we rotate the positive $x$-axis by angles $i \cdot \theta$, $0 \le i < k$, then we get $k$ rays. Each pair of consecutive rays defines a cone whose apex is at the origin. We denote the collection of these $k$ cones by $\zeta_{k}$. Let $C$ be a cone of $\zeta_{k}$. For any point $p$ in the plane, we define $C_{p}$ to be the cone obtained by translating $C$ such that its apex is at $p$. 

Next, given an angle $\theta$ for a particular cone and an angle $\gamma \in [0, \frac{\pi - 3\theta}{2})$, we give the definition of the sweeping line: For any vertex $x$ in a cone, let the \emph{sweeping line} be the line through the vertex $x$ that is at an angle of $\gamma$ to the line perpendicular to the bisector of the cone. We then define $x_{\gamma}$ to be the intersection of the left-side of the cone and this sweeping line (see Figure~\ref{fig:definition}).

\begin{figure}[h]
\includegraphics{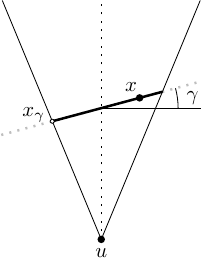}
\centering
\caption{Defining the point $x_{\gamma}$ for any vertex $x$.}
\label{fig:definition}
\end{figure}

Finally, we define the sweeping line graph:
\begin{definition}[Sweeping line graph] Given a set of points $P$ in the plane, an integer $k \ge 7$, $\theta = \frac{2\pi}{k}$, and $\gamma \in [0, \frac{\pi - 3\theta}{2})$. The sweeping line graph is defined as follows:
\begin{enumerate}
    \item The vertices of the graph are the points of $P$.
    \item For each vertex $u$ of $P$ and for each cone $C$ of $\zeta_{k}$, such that the translated cone $C_{u}$ contains one or more vertices of $P \setminus \{u\}$, the spanner contains the undirected edge $(u, r)$, where $r$ is the vertex in $C_{u} \cap P \setminus \{u\}$, which minimizes $|ur_{\gamma}|$. This vertex $r$ is referred to as the \emph{closest} vertex in this cone of $u$. 
\end{enumerate}
\end{definition}
In the remainder of the paper, we use $|ur_{\gamma}|$ when measuring the closeness between a vertex $r$ and the apex $u$ of a cone that contains it.

For ease of exposition, we only consider point sets in general position: no two vertices lie on a line parallel to one of the rays that define the cones, no two vertices lie on a line parallel to the sweeping line of any cone, and no three points are collinear. These assumptions can be removed using standard perturbation techniques. 

We note that the construction time for the sweeping line graph is identical to that of the $\Theta$-graph: $O(n \log n)$ time. This can for example be achieved using the construction algorithm by Keil~\cite{K88} and using the slanted sweepline instead of the perpendicular one (or alternatively rotating the point set). 

Using the structure of the sweeping line graph, we define a simple algorithm called \emph{sweeping-routing} to construct a path between any two vertices $s$ and $t$. The name comes from the fact that this is also a 1-local routing algorithm on the sweeping line graph. Let $t$ be the destination of the routing algorithm and let $u$ be the current vertex (initially $u = s$). If there exists a direct edge to $t$, follow this edge. Otherwise, follow the edge to the closest vertex in the cone of $u$ that contains $t$.

\subsection{Auxiliary Lemmas}
In order to prove that the sweeping line graph is indeed a spanner, we start with a number of auxiliary lemmas needed to prove the main geometric lemma used throughout our proofs.

\begin{lemma}
\label{lem:lemma1}
Let $\theta \in (0, \frac{2\pi}{7}]$ and $\gamma \in [0, \frac{\pi - 3\theta}{2})$. Then $\cos(\frac{\theta}{2} + \gamma) - \sin \theta > 0$.
\end{lemma}
\begin{proof}
Since $\cos(\frac{\theta}{2} + \gamma) - \sin\theta$ is decreasing with respect to $\gamma$ in our domain, it is minimized when $\gamma$ is maximized. It follows that:
\begin{align*}
\cos\left(\frac{\theta}{2} + \gamma\right) - \sin \theta &> \cos\left(\frac{\theta}{2} + \frac{\pi}{2} - \frac{3\theta}{2}\right) - \sin \theta \\
&= \cos\left(\frac{\pi}{2} - \theta\right) - \sin\theta\\
&= \sin\theta - \sin\theta\\
&= 0
\end{align*}
Therefore, within our domain, $\cos(\frac{\theta}{2} + \gamma) - \sin \theta > 0$.
\end{proof}

\begin{lemma}
\label{lem:lemma2}
Let $\theta \in (0, \frac{2\pi}{7}]$, $\gamma \in [0,\frac{\pi - 3\theta}{2})$, and $\kappa \in [0,\theta]$. Then
$\cos(\frac{\theta}{2} - \gamma - \kappa) > 0$ and  $\cos(\frac{\theta}{2} + \gamma - \kappa) > 0$.
\end{lemma}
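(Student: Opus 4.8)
The plan is to bound the arguments of both cosines and show they lie strictly inside $(-\frac{\pi}{2}, \frac{\pi}{2})$, where cosine is positive. For $\theta \in (0, \frac{2\pi}{7}]$, $\gamma \in [0, \frac{\pi-3\theta}{2})$, and $\kappa \in [0,\theta]$, the key is that both arguments are combinations of quantities of size at most $O(\theta)$, and since $\theta$ is small (at most $\frac{2\pi}{7} < \frac{\pi}{3}$), the total should comfortably fit within the interval where cosine stays positive. Because cosine is even, it suffices in each case to show the argument lies in $(-\frac{\pi}{2}, \frac{\pi}{2})$.

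First I would handle $\cos(\frac{\theta}{2} + \gamma - \kappa)$, which I expect to be the more delicate of the two since the $+\gamma$ term pushes the argument upward. I would find the range of the argument $\frac{\theta}{2} + \gamma - \kappa$ by maximizing and minimizing over the allowed values. For the upper bound, I would take $\gamma$ toward its supremum $\frac{\pi - 3\theta}{2}$ and $\kappa = 0$, giving a strict bound $\frac{\theta}{2} + \frac{\pi - 3\theta}{2} = \frac{\pi}{2} - \theta < \frac{\pi}{2}$; this is exactly the computation already appearing in Lemma~\ref{lem:lemma1}, so the upper side lands strictly below $\frac{\pi}{2}$. For the lower bound, I would take $\gamma = 0$ and $\kappa = \theta$, giving $\frac{\theta}{2} - \theta = -\frac{\theta}{2}$, which exceeds $-\frac{\pi}{2}$ since $\theta \le \frac{2\pi}{7} < \pi$. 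Hence the argument lies in $(-\frac{\theta}{2}, \frac{\pi}{2} - \theta) \subset (-\frac{\pi}{2}, \frac{\pi}{2})$, so the cosine is positive.

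For $\cos(\frac{\theta}{2} - \gamma - \kappa)$, the argument is bounded above by $\frac{\theta}{2}$ (taking $\gamma = \kappa = 0$), and $\frac{\theta}{2} \le \frac{\pi}{7} < \frac{\pi}{2}$. For the lower bound I would take $\gamma$ near $\frac{\pi-3\theta}{2}$ and $\kappa = \theta$, giving a bound strictly above $\frac{\theta}{2} - \frac{\pi - 3\theta}{2} - \theta = \theta - \frac{\pi}{2} > -\frac{\pi}{2}$ since $\theta > 0$. Thus this argument lies in $(\theta - \frac{\pi}{2}, \frac{\theta}{2}] \subset (-\frac{\pi}{2}, \frac{\pi}{2})$ as well, giving positivity.

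I do not anticipate a serious obstacle here; the lemma is essentially a matter of carefully tracking the extreme values of a linear expression in $\gamma$ and $\kappa$ and comparing against $\pm\frac{\pi}{2}$. The only point requiring slight care is the strictness of the bound on $\gamma$ (the interval for $\gamma$ is half-open), which ensures the inequalities at the $\gamma$-extremes are strict rather than merely nonstrict; reusing the algebraic simplification from Lemma~\ref{lem:lemma1} keeps this clean.
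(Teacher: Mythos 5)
Your proposal is correct and follows essentially the same route as the paper: both show that each argument $\frac{\theta}{2} \pm \gamma - \kappa$ lies strictly inside $(-\frac{\pi}{2}, \frac{\pi}{2})$ by bounding it at the extreme values of $\gamma$ and $\kappa$, where your bounds are just the exact extrema while the paper uses slightly looser but equally sufficient ones. The only nitpick is that your interval for $\frac{\theta}{2}+\gamma-\kappa$ should be closed at the left endpoint $-\frac{\theta}{2}$ (it is attained at $\gamma=0$, $\kappa=\theta$), which changes nothing in the conclusion.
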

\begin{proof}
To prove this, it suffices to show that $-\frac{\pi}{2} < \frac{\theta}{2} - \gamma - \kappa, \frac{\theta}{2} + \gamma - \kappa < \frac{\pi}{2}$ as within this domain, $\cos(\frac{\theta}{2} - \gamma - \kappa)$ and  $\cos(\frac{\theta}{2} + \gamma - \kappa)$ are greater than 0.

\emph{Proof of $-\frac{\pi}{2} < \frac{\theta}{2} - \gamma - \kappa < \frac{\pi}{2}$:}

First, we show that $\frac{\theta}{2} - \gamma - \kappa$ is upperbounded by $\frac{\pi}{2}$:
\begin{align*}
    \frac{\theta}{2} - \gamma - \kappa \le \frac{\theta}{2} &\le \frac{\pi}{7} \text{ (using the domain of $\theta$)}\\
    &< \frac{\pi}{2}
\end{align*}

Next, we show that $\frac{\theta}{2} - \gamma - \kappa$ is lowerbounded by $-\frac{\pi}{2}$:
\begin{align*}
    \frac{\theta}{2} - \gamma - \kappa > -\gamma - \kappa &\ge -\frac{\pi - 3\theta}{2}- \theta \text{ (using the domain of $\gamma$ and $\kappa$)}\\
    &> -\frac{\pi}{2}
\end{align*}

\emph{Proof of $-\frac{\pi}{2} < \frac{\theta}{2} + \gamma - \kappa < \frac{\pi}{2}$:}

First, we show that $\frac{\theta}{2} + \gamma - \kappa$ is upperbounded by $\frac{\pi}{2}$:
\begin{align*}
    \frac{\theta}{2} + \gamma - \kappa \le \frac{\theta}{2} + \gamma &< \frac{\theta}{2} + \frac{\pi - 3\theta}{2} \text{ (using the bounds on $\gamma$)}\\
    &= \frac{\pi -2\theta}{2}\\
    &< \frac{\pi}{2}
\end{align*}

Next, we show that $\frac{\theta}{2} + \gamma - \kappa$ is lowerbounded by $-\frac{\pi}{2}$:
\begin{align*}
    \frac{\theta}{2} + \gamma - \kappa > -\kappa &\ge -\frac{2\pi}{7} \text{ (using the bounds on $\kappa$)}\\
    &> -\frac{\pi}{2}
\end{align*}
\end{proof}

\begin{lemma}
\label{lem:lemma3}
Let $a$ and $b$ be positive reals and $\theta \in (0, \frac{2\pi}{7}]$, $\gamma \in [0,\frac{\pi - 3\theta}{2})$, and $\kappa \in [0,\theta]$. Then
$a-\frac{b(\cos(\frac{\theta}{2} + \gamma) - \sin \theta)}{\cos (\frac{\theta}{2} - \gamma - \kappa)} \le a - b(\cos(\frac{\theta}{2} + \gamma) - \sin \theta)$ and $a-\frac{b(\cos(\frac{\theta}{2} + \gamma) - \sin \theta)}{\cos (\frac{\theta}{2} + \gamma - \kappa)} \le a - b(\cos(\frac{\theta}{2} + \gamma) - \sin \theta)$.
\end{lemma}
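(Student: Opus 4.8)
The plan is to handle both inequalities uniformly, since they differ only in the sign of $\gamma$ appearing in the cosine in the denominator. First I would introduce the shorthand $c = b\left(\cos\left(\frac{\theta}{2} + \gamma\right) - \sin\theta\right)$ for the common numerator term. Lemma~\ref{lem:lemma1} guarantees that $\cos\left(\frac{\theta}{2} + \gamma\right) - \sin\theta > 0$ on our domain, and since $b$ is a positive real, it follows that $c > 0$. This positivity is the crucial ingredient that makes the subsequent cancellations valid.

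Next I would cancel the common summand $a$ and the common factor $c$ from both sides. Because $c > 0$, dividing through by $c$ does not reverse the inequality, so the first claim reduces to showing $\frac{1}{\cos\left(\frac{\theta}{2} - \gamma - \kappa\right)} \ge 1$ and the second to $\frac{1}{\cos\left(\frac{\theta}{2} + \gamma - \kappa\right)} \ge 1$. This is exactly where Lemma~\ref{lem:lemma2} does the heavy lifting: it establishes that both $\cos\left(\frac{\theta}{2} - \gamma - \kappa\right)$ and $\cos\left(\frac{\theta}{2} + \gamma - \kappa\right)$ are strictly positive over the stated ranges of $\theta, \gamma, \kappa$. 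With a positive denominator in hand, the reciprocal is well-defined and the manipulation of the inequality is legitimate.

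Finally, since the cosine function never exceeds $1$, we have $0 < \cos\left(\frac{\theta}{2} \pm \gamma - \kappa\right) \le 1$, and the reciprocal of a number lying in $(0,1]$ is at least $1$. Multiplying back through by $c$ and subtracting from $a$ recovers precisely the two stated inequalities. I do not expect any genuine obstacle: the entire substance has been front-loaded into Lemma~\ref{lem:lemma1} (positivity of the numerator) and Lemma~\ref{lem:lemma2} (positivity of the two denominators), so all that remains is the elementary bound $\cos \le 1$ combined with careful sign bookkeeping when dividing by a positive quantity.
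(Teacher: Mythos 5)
Your proposal is correct and follows essentially the same argument as the paper's proof: both rely on Lemma~\ref{lem:lemma1} for positivity of the numerator $b\left(\cos\left(\frac{\theta}{2}+\gamma\right)-\sin\theta\right)$, Lemma~\ref{lem:lemma2} for positivity of the denominators, and the trivial bound $\cos(\cdot)\le 1$ to conclude that the reciprocal is at least $1$. The only difference is presentational — you reduce the claim backwards to $\frac{1}{\cos(\cdot)}\ge 1$ while the paper derives the inequality forwards from $0<\cos(\cdot)\le 1$ — which is not a substantive distinction.
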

\begin{proof}
    By Lemma~\ref{lem:lemma2}, we know that $0 < \cos (\frac{\theta}{2} - \gamma - \kappa) \le 1$. This implies that $1 \le \frac{1}{\cos (\frac{\theta}{2} - \gamma - \kappa)}$ and thus $-\frac{1}{\cos \left(\frac{\theta}{2} - \gamma - \kappa\right)} \le -1$. 
    
    Using that $(\cos(\frac{\theta}{2} + \gamma) - \sin \theta) > 0$ from Lemma~\ref{lem:lemma1}, we obtain:
    \begin{align*}
        -\frac{b(\cos(\frac{\theta}{2} + \gamma) - \sin \theta)}{\cos (\frac{\theta}{2} - \gamma - \kappa)} &\le -b\left(\cos\left(\frac{\theta}{2} + \gamma\right) - \sin \theta\right)
    \end{align*}
    which implies that
    \begin{align*}
        a - \frac{b(\cos(\frac{\theta}{2} + \gamma) - \sin \theta)}{\cos (\frac{\theta}{2} - \gamma - \kappa)} &\le a - b\left(\cos\left(\frac{\theta}{2} + \gamma\right) - \sin \theta\right)
    \end{align*}
    An analogous argument shows that $a-\frac{b(\cos(\frac{\theta}{2} + \gamma) - \sin \theta)}{\cos (\frac{\theta}{2} + \gamma - \kappa)} \le a - b(\cos(\frac{\theta}{2} + \gamma) - \sin \theta)$.
\end{proof}

\begin{lemma}
\label{lem:lemma4}
Let $\theta \in (0, \frac{2\pi}{7}]$ and $\gamma \in [0, \frac{\pi - 3\theta}{2})$. Then $\cos(\frac{\theta}{2} - \gamma) \ge \cos(\frac{\theta}{2} + \gamma)$.
\end{lemma}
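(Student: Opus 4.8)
The plan is to rewrite the difference $\cos(\frac{\theta}{2} - \gamma) - \cos(\frac{\theta}{2} + \gamma)$ in a form whose sign is transparent, and then argue that each factor is nonnegative on the given domain. First I would apply the cosine subtraction and addition identities to the two terms, which yields
\[
\cos\left(\frac{\theta}{2} - \gamma\right) - \cos\left(\frac{\theta}{2} + \gamma\right) = 2\sin\left(\frac{\theta}{2}\right)\sin(\gamma).
\]
(Equivalently, one could invoke the sum-to-product identity directly.) It then suffices to show that both $\sin(\frac{\theta}{2})$ and $\sin(\gamma)$ are nonnegative on the domain.

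Next I would verify the sign of each sine factor using the domain constraints. Since $\theta \in (0, \frac{2\pi}{7}]$, we have $\frac{\theta}{2} \in (0, \frac{\pi}{7}]$, and because $\frac{\pi}{7} < \pi$, the argument lies strictly inside $(0, \pi)$ where sine is positive, so $\sin(\frac{\theta}{2}) > 0$. For the second factor, the bound $\gamma < \frac{\pi - 3\theta}{2} < \frac{\pi}{2}$ together with $\gamma \ge 0$ places $\gamma$ in $[0, \frac{\pi}{2})$, on which sine is nonnegative, so $\sin(\gamma) \ge 0$. Combining these, the product $2\sin(\frac{\theta}{2})\sin(\gamma) \ge 0$, which is exactly the desired inequality.

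An alternative route, should one prefer to avoid the product form, is to appeal directly to the monotonicity of cosine: since cosine is even and decreasing on $[0, \pi]$, and $|\frac{\theta}{2} - \gamma| \le \frac{\theta}{2} + \gamma$ whenever $\frac{\theta}{2}, \gamma \ge 0$, it would remain only to confirm that $\frac{\theta}{2} + \gamma < \pi$ (which follows from $\frac{\theta}{2} + \gamma < \frac{\pi}{2} - \theta$, a bound already established in the proof of Lemma~\ref{lem:lemma2}). In either case there is no genuine obstacle here; the only care needed is the bookkeeping that keeps both relevant angles within the interval where the chosen sign or monotonicity argument applies, and this is immediate from the stated ranges of $\theta$ and $\gamma$. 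I would favor the product-identity argument, as it reduces the claim to two one-line sign checks.
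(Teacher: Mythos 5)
Your proposal is correct and follows essentially the same route as the paper: both expand the difference $\cos(\frac{\theta}{2}-\gamma) - \cos(\frac{\theta}{2}+\gamma)$ via the angle-sum identities into $2\sin(\frac{\theta}{2})\sin(\gamma)$ and conclude nonnegativity from the domains of $\theta$ and $\gamma$. Your write-up merely spells out the two sign checks more explicitly (and sketches an optional monotonicity alternative), but the core argument is identical.
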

\begin{proof}
To prove this, we show that $\cos\left(\frac{\theta}{2} - \gamma\right) -\cos\left(\frac{\theta}{2} + \gamma\right)$ is at least 0.
\begin{align*}
    \cos\left(\frac{\theta}{2} - \gamma\right) -\cos\left(\frac{\theta}{2} + \gamma\right) &= \cos\frac{\theta}{2}\cos \gamma + \sin \frac{\theta}{2} \sin \gamma - \cos \frac{\theta}{2} \cos  \gamma + \sin \frac{\theta}{2} \sin \gamma\\
    &= 2\sin \frac{\theta}{2} \sin \gamma\\
    &\ge 0 \text{ (due to the domain of $\theta$ and $\gamma$)}
\end{align*}
\end{proof}

\begin{lemma}
\label{lem:lemma5}
Let $\theta \in (0, \frac{2\pi}{7}]$, $\gamma \in [0, \frac{\pi-3\theta}{2})$, and $\kappa \in [0, \theta]$. Then $\cos(\frac{\theta}{2} - \gamma - \kappa) \ge \cos(\frac{\theta}{2} + \gamma)$.
\end{lemma}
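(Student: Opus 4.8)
The plan is to exploit the fact that cosine is an even function that is decreasing on $[0,\pi]$; consequently, for two angles $\alpha,\beta$ both lying in $[-\pi,\pi]$ one has $\cos\alpha \ge \cos\beta$ precisely when $|\alpha| \le |\beta|$, since $\cos\alpha = \cos|\alpha|$ with $|\alpha|,|\beta| \in [0,\pi]$. I would therefore reduce the desired inequality $\cos(\frac{\theta}{2}-\gamma-\kappa) \ge \cos(\frac{\theta}{2}+\gamma)$ to the purely algebraic statement $|\frac{\theta}{2}-\gamma-\kappa| \le |\frac{\theta}{2}+\gamma|$.

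First I would confirm that both arguments lie in a range where this monotonicity-in-absolute-value principle applies. By Lemma~\ref{lem:lemma2} we already know $-\frac{\pi}{2} < \frac{\theta}{2}-\gamma-\kappa < \frac{\pi}{2}$, and the same computation (specialized to $\kappa=0$) gives $0 \le \frac{\theta}{2}+\gamma < \frac{\pi-2\theta}{2} < \frac{\pi}{2}$. In particular $|\frac{\theta}{2}+\gamma| = \frac{\theta}{2}+\gamma$, and both angles lie in $(-\frac{\pi}{2},\frac{\pi}{2}) \subseteq [-\pi,\pi]$, so the reduction is valid.

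The only real work is bounding $|\frac{\theta}{2}-\gamma-\kappa|$, which I would handle by splitting on the sign of $\frac{\theta}{2}-\gamma-\kappa$. If it is nonnegative, then $|\frac{\theta}{2}-\gamma-\kappa| = \frac{\theta}{2}-\gamma-\kappa \le \frac{\theta}{2}+\gamma$ reduces to $0 \le 2\gamma+\kappa$, which holds since $\gamma,\kappa \ge 0$. If it is negative, then $|\frac{\theta}{2}-\gamma-\kappa| = \gamma+\kappa-\frac{\theta}{2}$, and the inequality $\gamma+\kappa-\frac{\theta}{2} \le \frac{\theta}{2}+\gamma$ reduces to $\kappa \le \theta$, which is exactly the upper bound in the domain of $\kappa$. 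Either way $|\frac{\theta}{2}-\gamma-\kappa| \le \frac{\theta}{2}+\gamma$, and applying the cosine monotonicity yields the claim.

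There is no serious obstacle here. The one subtlety is that $\frac{\theta}{2}-\gamma-\kappa$ can be positive or negative, so one cannot simply drop the absolute value, which is why the sign case analysis is needed; it is reassuring that the two cases invoke the two different domain constraints ($\gamma,\kappa \ge 0$ versus $\kappa \le \theta$), indicating the hypotheses are used tightly. An alternative would be a direct product-to-sum trigonometric expansion in the spirit of Lemma~\ref{lem:lemma4}, but the presence of the third angle $\kappa$ makes that messier, so I would prefer the absolute-value argument above.
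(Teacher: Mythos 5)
Your proof is correct, but it takes a different route from the paper. The paper views $\cos(\frac{\theta}{2}-\gamma-\kappa) = \cos(\kappa - (\frac{\theta}{2}-\gamma))$ as a shifted cosine in the variable $\kappa$, splits into cases according to the sign of $\frac{\theta}{2}-\gamma$ (and, within the positive case, according to which side of $\frac{\theta}{2}-\gamma$ the value $\kappa$ lies on), uses monotonicity on each piece, and closes the boundary case $\kappa = 0$ by invoking Lemma~\ref{lem:lemma4}. You instead reduce the whole claim to the purely algebraic inequality $\left|\frac{\theta}{2}-\gamma-\kappa\right| \le \frac{\theta}{2}+\gamma$, using that cosine is even and decreasing on $[0,\pi]$, and settle it by a sign split whose two cases use exactly $\gamma,\kappa \ge 0$ and $\kappa \le \theta$ respectively. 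Your version is arguably cleaner: it needs no function-monotonicity discussion in $\kappa$, no dependence on Lemma~\ref{lem:lemma4} (the paper's Case~2 endpoint reduces to that lemma, whereas your Case~1 absorbs it as the trivial inequality $0 \le 2\gamma+\kappa$), and it makes visible precisely which hypotheses carry each case. One small remark on citations: Lemma~\ref{lem:lemma2} as \emph{stated} only asserts positivity of the cosine, not the angle range $(-\frac{\pi}{2},\frac{\pi}{2})$ you quote (that range appears inside its proof); this is immaterial here because your own bound $\left|\frac{\theta}{2}-\gamma-\kappa\right| \le \frac{\theta}{2}+\gamma < \frac{\pi}{2}$ already places both angles in the required range, so you could drop the citation entirely and the argument stands on its own. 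The paper's approach, for its part, is stylistically parallel to its proof of Lemma~\ref{lem:lemma6}, which is presumably why the authors chose it.
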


\begin{proof}
We observe that $\cos(\frac{\theta}{2} - \kappa - \gamma) = \cos(\kappa - (\frac{\theta}{2} - \gamma))$. Note that $ -\frac{\pi}{2} < \frac{\theta}{2} - \gamma < \frac{\pi}{2}$ and that $\cos(\kappa - (\frac{\theta}{2} - \gamma))$ corresponds to the shifted $\cos \kappa$ function.
To prove the lemma, we distinguish between two cases:

\emph{Case 1:} If $\frac{\theta}{2} - \gamma \le 0$, $\cos(\kappa - (\frac{\theta}{2}-\gamma))$ corresponds to translating $\cos \kappa$ to the left by $\frac{\theta}{2} - \gamma$. Therefore, $\cos(\kappa - (\frac{\theta}{2} - \gamma))$ is decreasing over the domain of $\kappa$, which implies that $\cos(\frac{\theta}{2} - \gamma - \kappa) \ge \cos(\frac{\theta}{2} - \gamma - \theta) = \cos(\frac{\theta}{2} + \gamma)$.

\emph{Case 2:} If $\frac{\theta}{2} - \gamma > 0$, $\cos(\kappa - (\frac{\theta}{2}-\gamma))$ corresponds to translating $\cos \kappa$ to the right by $\frac{\theta}{2}-\gamma$. Therefore, for $\kappa \in (\frac{\theta}{2}-\gamma, \theta]$ the function is decreasing, so we can apply an argument analogous to that in Case 1 to prove the result in this domain.

It remains to prove the result for $\kappa \in [0, \frac{\theta}{2}-\gamma]$. In this domain, $\cos(\kappa - (\frac{\theta}{2}-\gamma))$ is increasing. Therefore to prove the result, we need to show that at $\kappa = 0$ (where $\cos(\kappa - (\frac{\theta}{2}-\gamma))$ is minimized in this domain), $\cos(\kappa - (\frac{\theta}{2}-\gamma)) > \cos(\frac{\theta}{2} + \gamma)$. After substituting $\kappa = 0$, we see that this corresponds to showing that $\cos(\frac{\theta}{2} - \gamma) \ge \cos(\frac{\theta}{2} + \gamma)$ which follows from Lemma~\ref{lem:lemma4}.
\end{proof}

\begin{lemma}
\label{lem:lemma6}
Let $\theta \in (0, \frac{2\pi}{7}]$, $\gamma \in [0, \frac{\pi-3\theta}{2})$, and $\kappa \in [0, \theta]$. Then $\cos(\frac{\theta}{2} + \gamma - \kappa) \ge \cos(\frac{\theta}{2} + \gamma)$.
\end{lemma}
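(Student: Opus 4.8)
The plan is to follow the same strategy as in the proof of Lemma~\ref{lem:lemma5}, exploiting the fact that $\cos(\frac{\theta}{2} + \gamma - \kappa) = \cos(\kappa - (\frac{\theta}{2} + \gamma))$ is a horizontally shifted copy of $\cos \kappa$. First I would record two facts about the shift amount $\frac{\theta}{2} + \gamma$: it is strictly positive (since $\theta > 0$ and $\gamma \ge 0$), and it is strictly less than $\frac{\pi}{2}$ (this is exactly the bound $\frac{\theta}{2} + \gamma < \frac{\pi - 2\theta}{2} < \frac{\pi}{2}$ already established inside the proof of Lemma~\ref{lem:lemma2}). Consequently $\cos(\kappa - (\frac{\theta}{2} + \gamma))$ is obtained by translating $\cos \kappa$ to the \emph{right} by $\frac{\theta}{2} + \gamma$, and unlike in Lemma~\ref{lem:lemma5}, only the right-translation case can occur here.

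Next I would locate where this shifted cosine attains its minimum over $\kappa \in [0, \theta]$. On $[0, \min(\frac{\theta}{2} + \gamma, \theta)]$ the argument $\kappa - (\frac{\theta}{2} + \gamma)$ is nonpositive, so the function is increasing and is minimized at $\kappa = 0$, where it equals $\cos(\frac{\theta}{2} + \gamma)$; this already gives the desired inequality (with equality) on that subinterval. If $\frac{\theta}{2} + \gamma \ge \theta$, the whole domain is covered and we are done, so the only remaining work concerns the subinterval $(\frac{\theta}{2} + \gamma, \theta]$ that appears when $\frac{\theta}{2} + \gamma < \theta$.

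On that remaining subinterval the function is decreasing, so its minimum is attained at the right endpoint $\kappa = \theta$, where it equals $\cos(\theta - \frac{\theta}{2} - \gamma) = \cos(\frac{\theta}{2} - \gamma)$. The final step is to compare this endpoint value against the target $\cos(\frac{\theta}{2} + \gamma)$, and here I would invoke Lemma~\ref{lem:lemma4}, which gives precisely $\cos(\frac{\theta}{2} - \gamma) \ge \cos(\frac{\theta}{2} + \gamma)$. Combining the two subintervals shows that the minimum of $\cos(\frac{\theta}{2} + \gamma - \kappa)$ over the whole domain is at least $\cos(\frac{\theta}{2} + \gamma)$, which is the claim.

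I expect the main (and really only) subtlety to be the case distinction on whether $\frac{\theta}{2} + \gamma$ exceeds $\theta$: because $\gamma$ may be larger than $\frac{\theta}{2}$, the apex of the shifted cosine can lie outside the interval $[0, \theta]$, and one must not assume the minimum is attained at an interior point. Once the monotonicity on each side of the shift is pinned down and Lemma~\ref{lem:lemma4} is applied at the far endpoint, the argument closes with no heavy computation.
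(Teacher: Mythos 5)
Your proposal is correct and follows essentially the same route as the paper's proof: rewrite $\cos(\frac{\theta}{2}+\gamma-\kappa)$ as $\cos(\kappa - (\frac{\theta}{2}+\gamma))$, observe it is a rightward translate of $\cos\kappa$ since $0 < \frac{\theta}{2}+\gamma < \frac{\pi}{2}$, note the function is increasing up to the shift point (with value $\cos(\frac{\theta}{2}+\gamma)$ at $\kappa=0$) and decreasing beyond it, and close the decreasing subinterval at $\kappa=\theta$ via Lemma~\ref{lem:lemma4}. Your explicit handling of the case $\frac{\theta}{2}+\gamma \ge \theta$ (where the decreasing subinterval is empty) is a small point of added care that the paper leaves implicit, but it does not change the argument.
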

\begin{proof}
We observe that $\cos(\frac{\theta}{2} - \kappa + \gamma) = \cos(\kappa - (\frac{\theta}{2} + \gamma))$. Note that $ 0 < \frac{\theta}{2} + \gamma < \frac{\pi}{2}$ and that $\cos(\kappa - (\frac{\theta}{2} + \gamma))$ corresponds to translating $\cos \kappa$ to the right by $\frac{\theta}{2}+\gamma$, since $\frac{\theta}{2}+\gamma$ is positive. 
Therefore, for $\kappa \in [0, \frac{\theta}{2}+\gamma],\text{ } \cos(\kappa - (\frac{\theta}{2} + \gamma))$ is increasing and so
$\cos(\kappa - (\frac{\theta}{2} + \gamma)) \ge \cos(-(\frac{\theta}{2} + \gamma)) = \cos(\frac{\theta}{2} + \gamma).$

For $\kappa \in (\frac{\theta}{2}+\gamma, \theta], \text{ } \cos(\kappa - (\frac{\theta}{2} + \gamma))$  is decreasing. Therefore to prove the result, we need to show that at $\kappa = \theta$ (where $\cos(\kappa - (\frac{\theta}{2}+\gamma))$ is minimized in this domain), $\cos(\kappa - (\frac{\theta}{2}+\gamma)) \ge \cos(\frac{\theta}{2} + \gamma)$. After substituting $\kappa = \theta$, we see that this corresponds to showing that $\cos(\frac{\theta}{2} - \gamma) \ge \cos(\frac{\theta}{2} + \gamma)$ which we proved in Lemma~\ref{lem:lemma4}.
\end{proof}

\subsection{Main Geometric Lemma}
Now that we have our auxiliary lemmas, we are ready to prove the main geometric lemma that we use throughout our later proofs.

\begin{lemma}
\label{lem:lemma7}
Let $k \geq 7$ be an integer, $\theta$ = $\frac{2\pi}{k}$, and $\gamma \in [0, \frac{\pi - 3\theta}{2})$. Let $p$ and $q$ be two distinct points in the plane and let $C$ be the cone of $\zeta_{k}$ such that $q \in C_{p}$. Let $r$ be a point in $C_{p}$ such that it is at least as close to $p$ as $q$ is to $p$. Then
$| rq | \le | pq |  -  (\cos (\frac{\theta}{2} + \gamma) - \sin \theta) | pr |$.
\end{lemma}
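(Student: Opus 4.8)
The plan is to place $p$ at the origin with the right bounding ray of $C_p$ along the positive $x$-axis, so that $C_p$ is the set of directions with polar angle in $[0,\theta]$. Let $\hat n$ be the unit vector of angle $\frac\theta2+\gamma$; this is the inward normal to the sweeping lines, i.e.\ the bisector tilted by $\gamma$. For a point $x$ set $n(x)=\langle x-p,\hat n\rangle$. Since every sweeping line is a level set of $n$ and $x_\gamma$ lies on the left ray (at angle $\frac\theta2-\gamma$ to $\hat n$), a short computation gives $|px_\gamma|=n(x)/\cos(\frac\theta2-\gamma)$, with $\cos(\frac\theta2-\gamma)>0$ by Lemma~\ref{lem:lemma2}; hence the hypothesis that $r$ is at least as close to $p$ as $q$ is precisely $n(r)\le n(q)$. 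Writing $\mu,\nu$ for the angles $\vec{pr},\vec{pq}$ make with $\hat n$ (both in $[\gamma-\frac\theta2,\gamma+\frac\theta2]\subset(-\frac\pi2,\frac\pi2)$), this reads $|pr|\cos\mu\le|pq|\cos\nu$, and all cosines appearing are positive by Lemma~\ref{lem:lemma2}.

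The conceptual crux is a convexity reduction. Fix the ray from $p$ through $r$ and let $R'$ be the point where it crosses the sweeping line through $q$, i.e.\ the point on that ray with $n(R')=n(q)$; this exists since $n(q)>0$ and $\cos\mu>0$. The inequality $n(r)\le n(q)$ says exactly that $r$ lies on the segment $[p,R']$. Along this segment the map $x\mapsto|xq|$ is convex (it is the distance to the fixed point $q$), while $x\mapsto |pq|-(\cos(\frac\theta2+\gamma)-\sin\theta)\,|px|$ is affine, and the two agree at $x=p$ (both equal $|pq|$). A convex function dominated by an affine function at both endpoints of a segment is dominated on the whole segment, so it suffices to prove the claim at the far endpoint $x=R'$, that is, in the single case $n(r)=n(q)$.

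It then remains to treat the case where $q,r$ lie on a common sweeping line at distance $N:=n(q)$ from $p$. Scaling $N=1$ we have $|pq|=\sec\nu$, $|pr|=\sec\mu$, $|rq|=|\tan\nu-\tan\mu|$, so with $c:=\cos(\frac\theta2+\gamma)-\sin\theta$ the goal becomes $|\tan\nu-\tan\mu|\le\sec\nu-c\,\sec\mu$. I would split on the sign of $\nu-\mu$. When $\nu\ge\mu$, rearrange to $c\le\cos\mu\,\frac{1-\sin\nu}{\cos\nu}+\sin\mu$ and use that $\frac{1-\sin\nu}{\cos\nu}$ is decreasing while $\cos\mu\,\frac{1-\sin\nu}{\cos\nu}+\sin\mu$ is a sinusoid in $\mu$, so its minimum over $[\gamma-\frac\theta2,\gamma+\frac\theta2]$ sits at an endpoint. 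When $\nu<\mu$, rearrange to $c\,\sec\mu+\tan\mu\le\sec\nu+\tan\nu$ and use that $\sec+\tan$ is increasing while $c\,\sec\mu+\tan\mu$ is increasing (its derivative $\sec\mu\,(1+c\sin\mu)/\cos\mu$ is positive because $c<1$ by Lemma~\ref{lem:lemma1}). Either way the worst case collapses to the corner $\nu,\mu\in\{\gamma-\frac\theta2,\gamma+\frac\theta2\}$, leaving a closed-form inequality in $\theta,\gamma$ discharged via $c<1$ and the cosine comparisons of Lemmas~\ref{lem:lemma4}--\ref{lem:lemma6}. (Alternatively one can skip the reduction to $n(r)=n(q)$ and instead prove for general $r$ the sharper bound $|rq|\le|pq|-|pr|\,c/\cos(\frac\theta2\pm\gamma-\kappa)$, with $\kappa$ the angle $\vec{pr}$ makes with the nearer cone ray, and then weaken it to the stated form using Lemma~\ref{lem:lemma3}.)

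The main obstacle is this final trigonometric estimate: all of the geometry is absorbed into the convexity reduction, but one must still verify the boundary inequality in $\theta,\gamma$, which is exactly the kind of statement the auxiliary lemmas are designed to supply. Two points need care. First, that $|pq|-c\,|pr|$ is genuinely nonnegative, which follows since $n(r)\le n(q)\le|pq|$ together with $n(r)=|pr|\cos\mu\ge|pr|\cos(\frac\theta2+\gamma)$ (Lemma~\ref{lem:lemma6}) gives $c\,|pr|\le|pq|-|pr|\sin\theta\le|pq|$. Second, the sign handling when $\gamma>\frac\theta2$, where $\hat n$ lies outside the cone and $\nu,\mu$ may be negative, so the monotonicity claims for $\tan$ and $\sec$ must be invoked on the full interval $[\gamma-\frac\theta2,\gamma+\frac\theta2]\subset(-\frac\pi2,\frac\pi2)$ guaranteed by Lemma~\ref{lem:lemma2}.
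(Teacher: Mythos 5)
Your proof is correct, and it takes a genuinely different route from the paper's. The paper introduces the point $s$ where the line through $p$ and $q$ crosses the sweeping line through $r$, applies the triangle inequality $|rq| \le |rs| + |sq| = |rs| + |pq| - |ps|$, evaluates $|rs|$ and $|ps|$ by the sine rule in terms of the configuration angles $\alpha,\beta$, and then removes the resulting denominators $\cos(\frac{\theta}{2} \pm \gamma - (\alpha+\beta))$ via Lemmas~\ref{lem:lemma3}, \ref{lem:lemma5} and~\ref{lem:lemma6} --- this is precisely the ``alternative'' you sketch in your final parenthetical. Your primary route instead encodes closeness by the linear functional $n(x)=\langle x-p,\hat n\rangle$, observes that $x\mapsto|xq|$ is convex while $x\mapsto|pq|-c\,|px|$ is affine along the ray from $p$ through $r$ with the two agreeing at $x=p$, and thereby reduces to the single extremal configuration $n(r)=n(q)$, settled by a $\sec$/$\tan$ computation whose worst cases sit at corners of the angle domain. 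I verified that the corner inequalities you leave as sketches do discharge with exactly the tools you cite: for $\nu\ge\mu$, the corner $(\nu,\mu)=(\gamma+\frac{\theta}{2},\gamma-\frac{\theta}{2})$ becomes, after multiplying through by $\cos(\gamma+\frac{\theta}{2})$, the inequality $c\cos(\gamma+\frac{\theta}{2})\le\cos(\gamma-\frac{\theta}{2})-\sin\theta$, which follows from $c>0$ (Lemma~\ref{lem:lemma1}), $\cos(\gamma+\frac{\theta}{2})\le 1$, and $\cos(\gamma-\frac{\theta}{2})\ge\cos(\gamma+\frac{\theta}{2})$ (Lemma~\ref{lem:lemma4}); for $\nu<\mu$, the corner becomes $c\cos(\gamma-\frac{\theta}{2})\le c$, again Lemma~\ref{lem:lemma1}. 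In a final write-up you should spell out these two computations, since they are the only place the specific constant $c=\cos(\frac{\theta}{2}+\gamma)-\sin\theta$ is actually tested. As for what each approach buys: yours gains conceptual economy --- the two-free-point problem collapses to a one-parameter extremal family and the closing inequalities are one-liners, rather than the paper's two-case sine-rule analysis --- while the paper's yields a per-configuration bound that is sharper before being weakened (the denominator form handled by Lemma~\ref{lem:lemma3}) and avoids any appeal to convexity. Since both establish the identical statement, the downstream constrained and polygonal arguments are unaffected; it is worth noting, though, that when the paper later invokes Lemma~\ref{lem:lemma7} in Theorem~\ref{theo:constrained} with $q=s$ lying on the sweeping line through $v_0$, it does so exactly in your extremal case $n(r)=n(q)$, so your reduction isolates the case that actually matters there.
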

\begin{proof}
If $r = q$ then the claims holds. We assume in the rest of the proof that $r \neq q$.

Let $\ell$ be the line through $p$ and $q$. Let $s$ be the intersection of $\ell$ and the sweeping line through $r$. Let $a$ and $b$ be the intersection of the sweeping line through $r$ with the left and right side of the cone respectively. Let $x$ be the intersection of the right side of the cone and the line through $a$ perpendicular to the bisector of the cone. Finally, let $\alpha$ be the angle between the segments $pq$ and $pr$ and let $\beta$ be the angle between the segment $pr$ and either the left or right side of the cone such that $\alpha$ and $\beta$ do not overlap. We have $0 \leq \alpha, \beta \leq \theta$ and $0 \leq \alpha + \beta \leq \theta$. We distinguish two cases depending on whether $r$ lies to the left or right of $\ell$.

\emph{Case 1:} If $r$ lies to the left of $\ell$ (see Figure~\ref{fig:mainLemmaCase1}), we have that since $\triangle pax$ is isosceles, $\angle pax = \frac{\pi - \theta}{2}$. By considering $\triangle pas$, we can then deduce that $\angle asp = \frac{\pi}{2} + \frac{\theta}{2} - \gamma - (\alpha + \beta)$. Finally, by considering $\triangle prs$, we can deduce that $\angle prs = \frac{\pi}{2} - \frac{\theta}{2} + \gamma + \beta$.

\begin{figure}[h]
\includegraphics[]{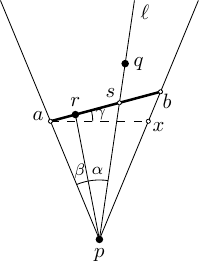}
\centering
\caption{The points and angles defined for Case 1.}
\label{fig:mainLemmaCase1}
\end{figure}

Applying the sine rule and trigonometric rewrite rules, we have:
\begin{align*}
| rs | &= | pr | \frac{\sin \alpha}{\cos (\frac{\theta}{2} - (\alpha + \beta) - \gamma)} \\
&\le | pr | \frac{\sin \theta}{\cos (\frac{\theta}{2} - (\alpha + \beta) - \gamma)} 
\end{align*}

and
\begin{align*}
| ps | &= | pr | \frac{\cos (\frac{\theta}{2} - \beta - \gamma)}{\cos (\frac{\theta}{2} - (\alpha + \beta) - \gamma)} \\
&\ge | pr | \frac{\cos (\frac{\theta}{2} + \gamma)}{\cos (\frac{\theta}{2} - (\alpha + \beta) - \gamma)} \text{(using Lemma~\ref{lem:lemma5}).}
\end{align*}

Applying the triangle inequality, we get:
\begin{align*}
    | rq | &\le | rs | + | sq | \\
    &= | rs | + | pq | - | ps | \\
    &\le | pq | + | pr | \frac{\sin \theta}{\cos (\frac{\theta}{2} - (\alpha + \beta) - \gamma)} - | pr | \frac{\cos (\frac{\theta}{2} + \gamma)}{\cos (\frac{\theta}{2} - (\alpha + \beta) - \gamma)} \\
    &= | pq | - | pr | \frac{1}{\cos \left(\frac{\theta}{2} - (\alpha + \beta) - \gamma\right)}\left(\cos \left(\frac{\theta}{2} + \gamma\right) - \sin{\theta}\right) \\
    &\le | pq | - | pr | \left(\cos \left(\frac{\theta}{2} + \gamma\right) - \sin{\theta}\right) \text{ (using Lemma~\ref{lem:lemma3}).}
\end{align*}

\emph{Case 2:} If $r$ lies to the right of $q$ (see Figure~\ref{fig:mainLemmaCase2}), we have that since $\triangle pax$ is an isosceles triangle, $\angle pxa = \frac{\pi - \theta}{2}$. This implies that $\angle axb = \frac{\pi + \theta}{2}$. By considering $\triangle abx$, we can then deduce that $\angle abx = \frac{\pi}{2} - \frac{\theta}{2} - \gamma$. We can then deduce that $\angle psb = \frac{\pi}{2} + \frac{\theta}{2} + \gamma - (\alpha + \beta)$ by considering $\triangle psb$. Finally, by considering $\triangle psr$, we can deduce that $\angle srp = \frac{\pi}{2} - \frac{\theta}{2} - \gamma + \beta$.

\begin{figure}[h]
\includegraphics[]{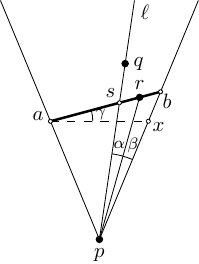}
\centering
\caption{The points and angles defined for Case 2.}
\label{fig:mainLemmaCase2}
\end{figure}

By applying the sine rule and using trigonometric rewrite rules we have:
\begin{align*}
| rs | &= | pr | \frac{\sin \alpha}{\cos (\frac{\theta}{2} - (\alpha + \beta) + \gamma)} \\
&\le | pr | \frac{\sin \theta}{\cos (\frac{\theta}{2} - (\alpha + \beta) + \gamma)} 
\end{align*}

and
\begin{align*}
| ps | &= | pr | \frac{\cos (\frac{\theta}{2} - \beta + \gamma)}{\cos (\frac{\theta}{2} - (\alpha + \beta) + \gamma)} \\
&\ge | pr | \frac{\cos (\frac{\theta}{2} + \gamma)}{\cos (\frac{\theta}{2} - (\alpha + \beta) + \gamma)} \text{(using Lemma~\ref{lem:lemma6}).}
\end{align*}

An argument identical to that of Case 1 completes the proof of this case.
\end{proof}

\section{The Unconstrained Setting}
Now that we have our tools ready, it is time to prove that the sweeping line graph is a spanner in the unconstrained setting. 

\begin{theorem}
Let k $\ge$ 7 be an integer, let $\theta = \frac{2\pi}{k}$, and let $\gamma \in [0, \frac{\pi - 3\theta}{2})$. Then the sweeping line construction produces a $t$-spanner, where t = $\frac{1}{\cos(\frac{\theta}{2} + \gamma) - \sin\theta}$.
\end{theorem}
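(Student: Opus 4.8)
The plan is to prove the spanning ratio bound by induction on the Euclidean distance $|st|$ over all pairs of vertices $s, t \in P$, mirroring the classical argument for $\Theta$-graphs in Narasimhan and Smid~\cite{NS-GSN-06} but driving the inductive step with our main geometric lemma (Lemma~\ref{lem:lemma7}). It suffices to show that for every pair $s,t$ there is a path in the sweeping line graph of length at most $t \cdot |st|$, since in the unconstrained setting $\delta_G(s,t) = |st|$ (the underlying graph is complete, so the shortest path between any two vertices is the straight segment). Order all pairs of vertices by their Euclidean distance; this is well defined and the base case is the closest pair, where the edge $st$ must itself be in the graph (it is trivially the closest vertex in the relevant cone), giving a path of length $|st| \le t\cdot|st|$.

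For the inductive step, consider a pair $s,t$ and let $C$ be the cone of $\zeta_k$ with $t \in C_s$. Let $r$ be the closest vertex to $s$ in $C_s$, so the edge $(s,r)$ is in the graph. If $r = t$ we are immediately done. Otherwise, since $r$ minimizes $|sr_\gamma|$ over all vertices in the cone and $t$ lies in the cone, $r$ is at least as close to $s$ as $t$ is (in the $|\cdot_\gamma|$ measure, which is what Lemma~\ref{lem:lemma7} is set up to exploit), so Lemma~\ref{lem:lemma7} applies with $p = s$, $q = t$ and this $r$, yielding
\begin{equation*}
|rt| \le |st| - \left(\cos\left(\tfrac{\theta}{2}+\gamma\right) - \sin\theta\right)|sr|.
\end{equation*}
Crucially, by Lemma~\ref{lem:lemma1} the coefficient $\cos(\frac{\theta}{2}+\gamma) - \sin\theta$ is strictly positive, so $|rt| < |st|$ and the pair $(r,t)$ is strictly closer than $(s,t)$; the induction hypothesis then gives a path from $r$ to $t$ of length at most $t\cdot|rt|$. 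Prepending the edge $(s,r)$ yields a path from $s$ to $t$, and it remains to verify that its total length is at most $t\cdot|st|$.

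The final calculation is to substitute and check the bound. We have
\begin{align*}
\delta(s,t) &\le |sr| + t\cdot|rt| \\
&\le |sr| + t\left(|st| - \left(\cos\left(\tfrac{\theta}{2}+\gamma\right) - \sin\theta\right)|sr|\right) \\
&= t\cdot|st| + |sr|\left(1 - t\left(\cos\left(\tfrac{\theta}{2}+\gamma\right) - \sin\theta\right)\right).
\end{align*}
Since $t = 1/(\cos(\frac{\theta}{2}+\gamma) - \sin\theta)$, the bracketed coefficient of $|sr|$ vanishes exactly, leaving $\delta(s,t) \le t\cdot|st|$ and closing the induction. The value of $t$ is precisely the one that makes the "cost" of the detour edge $(s,r)$ cancel against the distance saved in the recursion, which is why this particular constant appears.

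The main obstacle, and the point deserving the most care, is justifying that $r$ is at least as close to $s$ as $t$ is in the sense required by Lemma~\ref{lem:lemma7} --- the lemma is phrased in terms of genuine Euclidean closeness within the cone, whereas the graph selects $r$ to minimize the projected quantity $|sr_\gamma|$. One must confirm that the selection rule on $|\cdot_\gamma|$ correctly feeds the hypothesis "$r$ is at least as close to $p$ as $q$" that Lemma~\ref{lem:lemma7} consumes (equivalently, that measuring closeness via $|s\cdot_\gamma|$ is exactly the ordering under which the lemma's conclusion holds); this is the single conceptual link between the graph's construction and the geometric estimate, and everything downstream is routine. A secondary point is ensuring the general-position assumptions rule out ties and degenerate collinearities so that the "strictly closer pair" argument is valid and the induction is well-founded.
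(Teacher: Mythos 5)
Your proof is correct, and its engine is the same as the paper's: follow the edge from $s$ to the closest vertex $r$ in the cone containing the target, and drive the estimate with Lemma~\ref{lem:lemma7} plus the positivity from Lemma~\ref{lem:lemma1}. The packaging, however, is genuinely different. The paper runs the sweeping-routing algorithm to produce an explicit path $u = v_0, v_1, v_2, \ldots$, proves termination (each $v_{i+1}$ is strictly closer to $w$, so the vertices are pairwise distinct and $P$ is finite), and bounds the total length by a telescoping sum of $|v_iw| - |v_{i+1}w|$; this phrasing yields, for free, the competitive local-routing corollary stated right after the theorem. You instead induct on the rank of vertex pairs ordered by Euclidean distance, using $|rt| < |st|$ to invoke the hypothesis on the pair $(r,t)$ and choosing the ratio so that the $|sr|$ term cancels exactly --- which is precisely the structure the paper itself adopts later for the constrained setting (Theorem~\ref{theo:constrained}), so your version transfers more directly to that generalization, at the cost of not exhibiting the routing algorithm explicitly. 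Two blemishes, neither fatal. First, your base case is too glib: that the Euclidean-closest pair $(s,t)$ is joined by an edge is \emph{not} trivial, since the graph selects by the sweeping measure $|s\cdot_\gamma|$, not by Euclidean distance; the fix is already in your inductive step --- if the selected vertex $r$ were not $t$, Lemma~\ref{lem:lemma7} and Lemma~\ref{lem:lemma1} would give $|rt| < |st|$, contradicting minimality of the pair. Second, the concern you flag about feeding Lemma~\ref{lem:lemma7} is resolved by the paper's stated convention (in the definition of the graph) that closeness to a cone's apex is always measured by $|u\cdot_\gamma|$; the lemma's hypothesis is meant in that measure, which is exactly what the construction guarantees, so the link you worried about is sound. (Minor notational point: you use $t$ both for the target vertex and for the spanning ratio; rename one.)
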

\begin{proof}
Let $u$ and $w$ be two distinct vertices of $P$. We consider the path $u = v_{0}, v_{1}, v_{2}, ...$ that is constructed by the \emph{sweeping-routing} algorithm. We start by showing that this path terminates at $w$. Let $i \ge 0$ and assume that $v_{i} \neq w$. Hence, vertex $v_{i+1}$ exists. Consider the three vertices $v_{i}$, $v_{i+1}$, and $w$. Let $C$ be the cone such that $w \in C_{u}$. By the construction of the sweeping line graph, $v_{i+1}$ is at least as close to $v_i$ as $w$ is to $v_i$. Hence, by applying Lemma~\ref{lem:lemma7} and Lemma~\ref{lem:lemma1} we obtain:
\[| v_{i+1}w | \le | v_{i}w | - \left(\cos\left(\frac{\theta}{2} + \gamma\right) - \sin\theta\right)| v_{i}v_{i + 1} | < | v_{i}w |.\]

Hence, the vertices $v_{0}, v_{1}, v_{2}, ...$ on the path starting at $u$ are pairwise distinct, as each vertex on this path lies strictly closer to $w$ than any of its predecessors. Since the set $P$ is finite, this implies that the algorithm terminates. Therefore, the algorithm constructs a path between $u$ and $w$.

We now prove an upper bound on the length of this path. Let $m$ be the index such that $v_{m} = w$. Rearranging  $| v_{i+1}w | \le | v_{i}w | - (\cos(\frac{\theta}{2} + \gamma) - \sin\theta)| v_{i}v_{i + 1} |$, yields 
\[| v_{i}v_{i + 1} | \le \frac{1}{\cos\left(\frac{\theta}{2} + \gamma\right) - \sin\theta}(| v_{i}w | - | v_{i+1}w |),\] 
for each $i$ such that $0 \le i < m$. 

Therefore, the path between $u$ and $w$ has length
\begin{align*}
    \sum_{i=0}^{m-1} | v_{i}v_{i+1} | &\leq \frac{1}{\cos(\frac{\theta}{2} + \gamma) - \sin\theta} \sum_{i=0}^{m-1} (| v_{i}w | - | v_{i+1}w |)\\
    &= \frac{1}{\cos(\frac{\theta}{2} + \gamma) - \sin\theta}(| v_{0}w | - | v_{m}w |)\\
    &= \frac{1}{\cos(\frac{\theta}{2} + \gamma) - \sin\theta} |uw |,
\end{align*}
completing the proof. 
\end{proof}

In addition to showing that the graph is a spanner, the above proof shows that the sweeping-routing algorithm constructs a bounded length path, thus we obtain a local competitive routing algorithm. 

\begin{corollary}
Let k $\ge$ 7 be an integer, let $\theta = \frac{2\pi}{k}$, and let $\gamma \in [0, \frac{\pi - 3\theta}{2})$. Then for any pair of vertices $u$ and $w$ the sweeping-routing algorithm produces a path from $u$ to $w$ of length at most $\frac{1}{\cos(\frac{\theta}{2} + \gamma) - \sin\theta} \cdot |uw|$.
\end{corollary}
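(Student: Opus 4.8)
The plan is to observe that this corollary is an immediate consequence of the argument already carried out in the proof of the preceding Theorem, rather than something requiring a fresh analysis. The crucial point is that the spanning path exhibited in that proof was not an arbitrary shortest (or nearly shortest) path in the graph: it was specifically the path $u = v_0, v_1, v_2, \ldots$ produced step by step by the sweeping-routing algorithm, since at each vertex $v_i \neq w$ the algorithm follows the edge to the closest vertex in the cone of $v_i$ containing $w$. Hence everything established about that path applies verbatim to the output of the routing algorithm, and I would phrase the proof as a direct appeal to the Theorem's construction.

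Concretely, I would first recall that the combination of Lemma~\ref{lem:lemma7} and Lemma~\ref{lem:lemma1} yields, for every $i$ with $v_i \neq w$, the strict decrease $|v_{i+1}w| < |v_iw|$; this makes the sequence of vertices pairwise distinct and, since $P$ is finite, forces termination at some $v_m = w$, so the routing algorithm does produce a genuine $u$--$w$ path. Next I would reproduce the length estimate: rearranging the inequality $|v_{i+1}w| \le |v_iw| - (\cos(\tfrac{\theta}{2}+\gamma) - \sin\theta)\,|v_iv_{i+1}|$ bounds each edge length by $\tfrac{1}{\cos(\frac{\theta}{2}+\gamma)-\sin\theta}(|v_iw| - |v_{i+1}w|)$, and summing over $0 \le i < m$ telescopes to $\tfrac{1}{\cos(\frac{\theta}{2}+\gamma)-\sin\theta}(|v_0w| - |v_mw|) = \tfrac{1}{\cos(\frac{\theta}{2}+\gamma)-\sin\theta}\,|uw|$, using $v_0 = u$ and $v_m = w$. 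This is exactly the claimed bound.

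There is essentially no obstacle here, since the content is already contained in the Theorem's proof; the only thing worth making explicit is the conceptual distinction that the quantity being bounded is the \emph{total length of the routing path actually generated}, not merely the graph distance $\delta_H(u,w)$. In other words, the same telescoping argument that certifies the spanning ratio simultaneously certifies that the sweeping-routing algorithm is $t$-competitive with $t = \tfrac{1}{\cos(\frac{\theta}{2}+\gamma)-\sin\theta}$, which is precisely what the corollary asserts.
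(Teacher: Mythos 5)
Your proposal is correct and matches the paper's approach exactly: the paper gives no separate proof for this corollary, noting only that the path analyzed in the preceding theorem \emph{is} the sweeping-routing path, so the theorem's termination and telescoping argument directly bounds the routed path's length, which is precisely your argument.
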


\section{The Constrained Setting}
Next, we generalize the sweeping line graph to a more general setting with the introduction of \emph{line segment constraints}. Recall that $P$ is a set of points in the plane and that $S$ is a set of line segments connecting two points in $P$ (not every point in $P$ needs to be an endpoint of a constraint in $S$, but a point in $P$ can be the endpoint of an arbitrary number of constraints). The set of constraints is planar, i.e. no two constraints intersect properly. 

Let vertex $u$ be an endpoint of a constraint $c$ and let the other endpoint be $v$. Let $C$ be the cone of $\zeta_{k}$ such that $v \in C_{u}$. The line through $c$ splits $C_u$ into two \emph{subcones} and for simplicity, we say that $v$ is contained in both of these. In general, a vertex $u$ can be an endpoint of several constraints and thus a cone can be split into several subcones (subcones are ordered in a clockwise fashion). For ease of exposition, when a cone $C_u$ is not split, we consider $C_u$ itself to be a single subcone. We use $C^{j}_u$ to denote the $j$-th subcone of $C_u$.

Recall that for any vertex $x$ in a cone, we defined $x_{\gamma}$ to be the intersection of the left-side of the cone and the sweeping line through $x$. We define the constrained sweeping line graph (see Figure~\ref{fig:defnConstrained}):
\begin{definition}[Constrained sweeping line graph] Given a set of points $P$ in the plane, a plane set $S$ of line segment constraints connecting pairs of points in $P$, an integer $k \ge 7$, $\theta = \frac{2\pi}{k}$, and $\gamma \in [0, \frac{\pi - 3\theta}{2})$. The constrained sweeping line graph is defined as follows:
\begin{enumerate}
    \item The vertices of the graph are the points of $P$.
    \item For each vertex $u$ of $P$ and for each \emph{subcone} $C^j_u$ that contains one or more vertices of $P \setminus \{u\}$ visible to $u$, the spanner contains the undirected edge $(u, r)$, where $r$ is the vertex in $C^j_u \cap P \setminus \{u\}$, which is visible to $u$ and minimizes $|ur_{\gamma}|$. This vertex $r$ is referred to as the \emph{closest} visible vertex in this subcone of $u$.
\end{enumerate}
\end{definition}

\begin{figure}[h]
\includegraphics{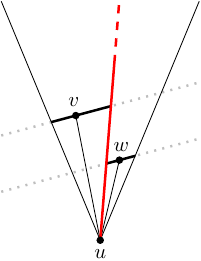}
\centering
\caption{The edges in a cone of the constrained sweeping line graph. The thick red segment represents a constraint. The sweeping line of a subcone is a thick black segment inside the subcone and grey dotted outside, as vertices outside the subcone as ignored.}
\label{fig:defnConstrained}
\end{figure}

To prove that the above graph is a spanner, we need three lemmas. A proof of Lemma~\ref{lem:convexChainConstrained} can be found in~\cite{BFRV12} (see also Figure~\ref{fig:visible}). 

\begin{lemma}[\cite{BFRV12}]
\label{lem:convexChainConstrained}
Let $u$, $v$, and $w$ be three arbitrary points in the plane such that $uw$ and $vw$ are visibility edges and $w$ is not the endpoint of a constraint intersecting the interior of triangle $uvw$. Then there exists a convex chain of visibility edges (different from the chain consisting of $uw$ and $wv$) from $u$ to $v$ in triangle $uvw$, such that the polygon defined by $uw$, $wv$ and the convex chain is empty and does not contain any constraints.
\end{lemma}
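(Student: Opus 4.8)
The plan is to establish the existence of the convex chain by an inductive argument on the number of vertices inside the triangle $uvw$, peeling off one visibility edge at a time while maintaining emptiness. First I would set up the induction: if the triangle $uvw$ contains no vertices of $P$ in its interior and no constraint crosses it, then $uv$ itself is a visibility edge and the trivial chain $u,v$ works (or, if $uv$ is obstructed only because $uv$ is not an edge we want to distinguish from the two-edge chain, we split at a corner as below). The interesting case is when the interior of the triangle is nonempty or a constraint intrudes. The key geometric idea is to look from $w$: since $w$ is not the endpoint of any constraint intersecting the interior of the triangle, every vertex inside the triangle is visible from $w$ along the line segment to $w$, or at least we can find a vertex $x$ in the interior (or a constraint endpoint forced into the triangle) that lets us subdivide.

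The main step I would carry out is the following decomposition. Consider the vertex $x$ inside triangle $uvw$ that, together with the constraint structure, lies on the boundary of the region visible from $w$; concretely, sweep a ray from $w$ and pick the vertex $x$ such that $wx$ is a visibility edge and the triangles $uwx$ and $xwv$ each contain strictly fewer interior vertices than $uwv$. I would argue that such an $x$ exists precisely because $w$ sees into the triangle unobstructed (no constraint has $w$ as an endpoint crossing the interior), so the portion of the triangle visible from $w$ is bounded by visibility edges $wx_1, wx_2, \dots$ to vertices $x_i$, and consecutive such vertices are mutually visible. Applying the induction hypothesis to the two (or several) subtriangles $u w x$ and $x w v$ yields convex chains from $u$ to $x$ and from $x$ to $v$; concatenating them gives a chain of visibility edges from $u$ to $v$.

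The two properties that require care are \emph{convexity} of the concatenated chain and \emph{emptiness} of the enclosed polygon. For emptiness, the inductive hypothesis guarantees each subpolygon is empty and constraint-free, and the choice of $x$ on the visibility boundary from $w$ ensures the pieces fit together without leaving uncovered vertices; I would verify that no vertex of $P$ can lie in the gap between the two subregions because any such vertex would have been visible from $w$ and hence chosen as a subdivision point. For convexity, a straightforward local argument at the join vertex $x$ shows the chain does not turn the wrong way, using that both chains bulge toward $w$ and that $x$ is a vertex of the visibility boundary; if a reflex angle were introduced at a join, the offending vertex could be used to further subdivide, contradicting minimality.

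I expect the main obstacle to be handling the constraints cleanly rather than the point-only case. Specifically, a constraint may cut through the triangle with both endpoints on or outside its boundary (but not at $w$), and I must ensure that the subdivision vertex $x$ and the resulting chains respect such constraints, i.e.\ that no chosen visibility edge crosses a constraint and that the final enclosed polygon contains no constraint in its interior. The cleanest route is to strengthen the induction hypothesis to assert emptiness of constraints as well as of vertices, and to choose $x$ as a constraint endpoint whenever a constraint forces itself into the visible region from $w$, thereby folding the constraint case into the same subdivision scheme. Since the statement cites \cite{constrained} for the full proof, I would at this point defer the routine but lengthy case analysis to that reference and record the inductive skeleton above as the proof strategy.
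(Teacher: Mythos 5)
The paper never proves this lemma: it imports it wholesale from the reference, stating only that ``a proof of Lemma~\ref{lem:convexChainConstrained} can be found in~\cite{constrained}.'' So there is no in-paper argument to compare against, and your sketch has to stand on its own as a proof --- which, as written, it does not. The first genuine gap is your claim that, since $w$ is not the endpoint of a constraint crossing the interior, ``every vertex inside the triangle is visible from $w$.'' This is false: the hypothesis only restricts constraints incident to $w$, while a constraint joining two vertices that both lie inside $\triangle uvw$ is perfectly admissible and can hide other interior vertices from $w$. Hence the existence of a subdivision vertex $x$ with $wx$ a visibility edge is precisely what must be proved, and your fallback (``or at least we can find a vertex $x$'') is an assertion, not an argument. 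The natural extremal choices are themselves delicate; for instance, if $x$ is the interior vertex minimizing the angle $\angle uwx$, the constraint blocking $wx$ can be incident to $u$, so its second endpoint need not yield a smaller angle, and the intended contradiction evaporates.

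The second gap is convexity at the junction. Concatenating the two inductively obtained convex chains at $x$ can create a reflex angle there, and your proposed repair (``the offending vertex could be used to further subdivide, contradicting minimality'') appeals to a minimality you never defined, so it is circular as stated. A workable repair does exist: the union of the two inductive polygons is exactly the polygon bounded by $uw$, $wv$ and the concatenated chain, hence empty of vertices and constraints; one can then replace the concatenated chain by the $w$-facing boundary of the convex hull of its vertices, observing that every shortcut edge lies inside that already-certified empty region and is therefore a visibility edge, that the resulting chain is convex, and that the new enclosed polygon is a subset of the old one. But this step (or an equally careful choice of $x$) is the real content of the lemma, and it is absent. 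Note also that your base case quietly assumes ``no constraint crosses the triangle'' rather than deriving it from the hypotheses, which is exactly the argument that an empty triangle admits no crossing constraint because such a constraint would need an endpoint strictly inside. Since you ultimately defer to~\cite{constrained} anyway, the sketch in its current form adds nothing beyond what the paper itself does; to count as a proof, these holes must be closed.
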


\begin{figure}[h]
\includegraphics{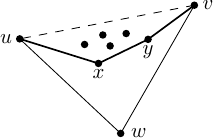}
\centering
\caption{Illustration of Lemma~\ref{lem:convexChainConstrained}.}
\label{fig:visible}
\end{figure}

\begin{lemma}
\label{lem:visibilityEdgeConstrained}
Let $u$ and $w$ be two distinct vertices in the constrained sweeping line graph such that $uw$ is a visibility edge and let $C$ be the cone of $\zeta_{k}$ such that $w \in C_{u}$. Let $v_{0}$ be the closest visible vertex in the subcone of $C_{u}$ that contains $w$. Let $\ell$ be the line through $u$ and $w$. Let $s$ be the intersection of $\ell$ and the sweeping line through $v_{0}$. Then $v_{0}s$ is a visibility edge.
\end{lemma}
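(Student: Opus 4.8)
The plan is to argue by contradiction: assuming $v_0 s$ is not a visibility edge, I will produce a vertex lying in the same subcone of $C_u$ as $w$ that is visible from $u$ and strictly closer to $u$ (in the sweeping-line measure $|u\,\cdot_{\gamma}|$) than $v_0$, contradicting the choice of $v_0$ as the closest visible vertex in that subcone. First I would record some basic geometry. Since $uw$ is a visibility edge and $w$ lies in the subcone of $C_u$ that contains it, $w$ is a visible candidate in that subcone, so $|uw_{\gamma}| \ge |u(v_0)_{\gamma}|$; hence the sweeping line through $v_0$ lies between $u$ and the sweeping line through $w$, and therefore the point $s = \ell \cap (\text{sweeping line through }v_0)$ lies on the segment $uw$. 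Consequently $us$ is a subsegment of the visibility edge $uw$ and is itself a visibility edge, while $uv_0$ is a visibility edge by definition of $v_0$. Moreover, since a subcone is an angular sector at $u$ (a convex region) and $u$, $v_0$, $w$ all lie in it, the whole triangle $uv_0s$ lies inside the subcone of $C_u$ containing $w$.

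Next I would set up the application of Lemma~\ref{lem:convexChainConstrained} to the triangle $uv_0s$ with the apex playing the role of the shared vertex $w$ in that lemma (so $u \leftrightarrow w$, $v_0 \leftrightarrow u$, $s \leftrightarrow v$). The two visibility-edge hypotheses hold since $uv_0$ and $us$ are visibility edges. The remaining hypothesis is that $u$ is not the endpoint of a constraint intersecting the interior of $\triangle uv_0s$; this follows because every constraint incident to $u$ lies along a subcone boundary, so no such constraint enters the interior of the subcone, and the triangle lies inside that subcone. Lemma~\ref{lem:convexChainConstrained} then yields a convex chain of visibility edges from $v_0$ to $s$, contained in $\triangle uv_0s$, whose bounding polygon $\Pi$ (formed by $uv_0$, the chain, and $su$) is empty of vertices and contains no constraints. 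Crucially, under the assumption that $v_0 s$ is \emph{not} a visibility edge, the single-edge chain $v_0 s$ does not qualify, so the chain returned has at least one intermediate vertex; let $v_1$ be the chain vertex adjacent to $v_0$.

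To finish, I would verify that $v_1$ supplies the contradiction. Because the chain lies in $\triangle uv_0s$ it bulges toward $u$ and is convex, so the fan of triangles from $u$ to the successive chain edges tiles $\Pi$; since $\Pi$ is empty and free of constraints, each segment from $u$ to a chain vertex stays inside $\Pi$ and crosses no constraint, and in particular $v_1$ is visible from $u$. Since $v_1$ lies strictly inside $\triangle uv_0s$, it lies strictly on the $u$-side of the line $v_0 s$, which is exactly the sweeping line through $v_0$; hence $|u(v_1)_{\gamma}| < |u(v_0)_{\gamma}|$. As $v_1$ also lies in the subcone containing $w$ and is visible from $u$, it is a visible vertex in that subcone strictly closer to $u$ than $v_0$, contradicting the choice of $v_0$. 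Therefore $v_0 s$ must be a visibility edge. The main obstacle I anticipate is the step arguing that the intermediate chain vertices (in particular $v_1$) are visible from $u$: this needs the emptiness and constraint-freeness of $\Pi$ together with the convexity of the chain, and care is required to justify that the segment $uv_1$ stays within $\Pi$ so that no constraint can block it; the rest reduces to routine positional bookkeeping about $s$ lying on $uw$ and the triangle lying in the subcone.
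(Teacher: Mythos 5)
Your proof is correct, but it follows a genuinely different route from the paper's. The paper's own argument is more elementary: assuming $v_0s$ is blocked, the blocking constraint cannot properly cross the visibility edges $uv_0$ and $us \subseteq uw$, so one of its endpoints---which is a vertex of $P$---must lie inside $\triangle uv_0s$; then among the vertices inside this triangle, the first one hit by the sweeping line moving out from $u$ is visible to $u$ and closer than $v_0$, contradicting the choice of $v_0$. You reach the same contradiction but manufacture the closer visible vertex differently: you apply Lemma~\ref{lem:convexChainConstrained} with apex $u$ to $\triangle uv_0s$, note that a non-visible $v_0s$ forces the convex chain to have an intermediate vertex $v_1$, and then argue that $v_1$ is visible to $u$ (using the emptiness and constraint-freeness of the bounded polygon plus a star-shapedness argument) and strictly closer to $u$ than $v_0$ (it lies strictly on the $u$-side of the line $v_0s$, which is exactly the sweeping line through $v_0$). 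Your route is heavier: it imports the convex chain lemma, which the paper reserves for the spanner theorems, and it needs both the containment claim for $uv_1$ that you flag yourself (which does hold, since the region between the chain and the segment $v_0s$ is convex, so a segment from $u$ to a chain vertex cannot enter it) and an appeal to general position to rule out $v_1$ lying on the sweeping line through $v_0$. What it buys is rigor exactly where the paper is terse: the paper's parenthetical assertion that ``the first vertex hit by the sweeping line starting from $u$'' is visible to $u$ is itself left unproved and requires a small extremality argument, whereas your empty-polygon argument makes the visibility of the closer vertex explicit; your preliminary observation that $s$ lies on the segment $uw$ (so that $us$ is indeed a visibility edge) is likewise needed by both proofs but left unstated in the paper.
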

\begin{proof}
We use a proof by contradiction (see Figure~\ref{fig:constrained} for an example layout). Assume $v_{0}s$ is not a visibility edge. Then there must be a line segment constraint intersecting $v_{0}s$. This implies that one of its endpoints lies in $\triangle uv_{0}s$, as $uw$ and $uv_{0}$ are visibility edges and thus the constraint cannot pass through them. Applying Lemma~\ref{lem:convexChainConstrained} on $\triangle uv_{0}s$, which contains at least one vertex in its interior, implies that there exists a vertex that is visible to $u$ and closer to $u$ than $v_{0}$ (in particular the first vertex hit by the sweeping line starting from $u$), contradicting that $v_0$ is the closest visible vertex. Therefore, no constraint intersects $v_{0}s$.
\end{proof}

\begin{figure}[h]
\includegraphics{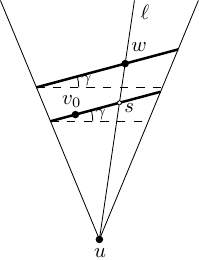}
\centering
\caption{An example layout of $p$, $q$, $v_0$, and $s$.}
\label{fig:constrained}
\end{figure}

The following lemma ensures that we can apply induction later. 

\begin{lemma}
\label{lem:segmentLengths}
Let $k \ge 7$ be an integer, let $\theta = \frac{2\pi}{k}$, and let $\gamma \in [0, \frac{\pi - 3\theta}{2})$. Let $u$ and $w$ be two distinct vertices in the constrained sweeping line graph and let $C$ be the cone of $\zeta_{k}$ such that $w \in C_{u}$. Let $v_{0}$ be a vertex in $C_{u}$ such that it is the closest visible vertex to $u$. Let $\ell$ be the line through $uw$ and let $s$ be the intersection of $\ell$ and the sweeping line through $v_{0}$. Then $| v_{0}s | < | uw |$, $| sw | < | uw |$, and $| v_{0}w | < | uw |$.
\end{lemma}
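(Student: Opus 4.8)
The plan is to reuse the exact configuration and the trigonometric identities already established in the proof of Lemma~\ref{lem:lemma7}, applied with $p = u$, $q = w$, and $r = v_0$. Since $v_0$ is the closest visible vertex in the subcone containing $w$, it is at least as close to $u$ as $w$ is, so the hypotheses of Lemma~\ref{lem:lemma7} hold and the same auxiliary points $s$, $a$, $b$, $x$ and angles $\alpha$, $\beta$ are available, together with the two cases according to whether $v_0$ lies to the left or right of $\ell$. If $v_0 = w$ all three inequalities are trivial, so I assume $v_0 \neq w$.

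The inequality $| v_0 w | < | uw |$ is the easiest and I would dispatch it first: applying Lemma~\ref{lem:lemma7} gives $| v_0 w | \le | uw | - (\cos(\tfrac{\theta}{2} + \gamma) - \sin\theta)\,| uv_0 |$, and since $| uv_0 | > 0$ while the coefficient is strictly positive by Lemma~\ref{lem:lemma1}, the claim follows immediately.

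For $| sw | < | uw |$, I would first argue that $s$ lies on the segment $uw$. Because $v_0$ is at least as close to $u$ as $w$ and $s$ lies on the sweeping line through $v_0$, the point $s$ sits no farther along the ray from $u$ through $w$ than $w$ itself, so $s \in [u,w]$ and $| us | + | sw | = | uw |$. The sine-rule expression for $| us | = | ps |$ from the proof of Lemma~\ref{lem:lemma7} has strictly positive numerator and denominator (the relevant cosines are positive by the bounds behind Lemma~\ref{lem:lemma2}), so $| us | > 0$; hence $| sw | = | uw | - | us | < | uw |$, and also $| us | \le | uw |$, which I reuse below.

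The remaining inequality $| v_0 s | < | uw |$ is the main obstacle, and here I would apply the sine rule in triangle $u v_0 s$. Using the angles already computed in Lemma~\ref{lem:lemma7}, this yields $| v_0 s | = | us | \frac{\sin\alpha}{\cos(\frac{\theta}{2} - \gamma - \beta)}$ in Case~1 and $| v_0 s | = | us | \frac{\sin\alpha}{\cos(\frac{\theta}{2} + \gamma - \beta)}$ in Case~2. In both cases I would bound the denominator from below by $\cos(\frac{\theta}{2} + \gamma)$, using Lemma~\ref{lem:lemma5} with $\kappa = \beta$ in Case~1 and Lemma~\ref{lem:lemma6} with $\kappa = \beta$ in Case~2, and bound $\sin\alpha \le \sin\theta$ since $0 \le \alpha \le \theta < \frac{\pi}{2}$. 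This gives $| v_0 s | \le | us | \frac{\sin\theta}{\cos(\frac{\theta}{2} + \gamma)}$, and Lemma~\ref{lem:lemma1} (equivalently, $\sin\theta < \cos(\frac{\theta}{2} + \gamma)$) forces the fraction strictly below $1$, so $| v_0 s | < | us | \le | uw |$. The two delicate points I anticipate are justifying that $s$ genuinely lies between $u$ and $w$ rather than on the opposite ray, and keeping the two left/right cases matched to the correct auxiliary lemma; both follow the pattern already set in the proof of Lemma~\ref{lem:lemma7}.
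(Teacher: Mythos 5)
Your proposal is correct, and two of its three parts coincide with the paper's proof: you dispatch $|v_0w| < |uw|$ exactly as the paper does (Lemma~\ref{lem:lemma7} with $p=u$, $q=w$, $r=v_0$, then Lemma~\ref{lem:lemma1}), and your argument that $s\in[u,w]$ gives $|sw| < |uw|$ is the paper's ``$sw$ is contained in $uw$'' step, made more explicit --- the paper simply asserts this containment, whereas you justify it by the ordering of the parallel sweeping lines through $u$, $v_0$, and $w$ along the transversal $\ell$, which is a genuine (if small) addition in rigor. Where you diverge is the third inequality, $|v_0s| < |uw|$: you re-open the trigonometric machinery inside the proof of Lemma~\ref{lem:lemma7}, applying the sine rule in $\triangle u v_0 s$ to get $|v_0s| = |us|\,\frac{\sin\alpha}{\cos(\frac{\theta}{2} \mp \gamma \mp \beta)}$ and then bounding the ratio below $1$ via Lemmas~\ref{lem:lemma5}/\ref{lem:lemma6} (with $\kappa=\beta$) and Lemma~\ref{lem:lemma1}. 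This is sound (your case-to-lemma matching is the right one), but it misses the slicker observation the paper uses: since $s$ lies \emph{on} the sweeping line through $v_0$, the point $v_0$ is by definition at least as close to $u$ as $s$ is, so Lemma~\ref{lem:lemma7} applies as a black box to the triple $p=u$, $q=s$, $r=v_0$ --- note the lemma is stated for arbitrary \emph{points}, not just vertices --- yielding $|v_0s| \le |us| - (\cos(\frac{\theta}{2}+\gamma)-\sin\theta)|uv_0| < |us| \le |uw|$ with no new case analysis. The paper's route is shorter and keeps Lemma~\ref{lem:lemma7} as the single geometric workhorse; yours buys a multiplicative bound $|v_0s| \le |us|\sin\theta/\cos(\frac{\theta}{2}+\gamma)$ at the cost of repeating the two-case trigonometry, and it works only because you correctly recycle the angle computations from the earlier proof.
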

\begin{proof}
Refer to Figure~\ref{fig:constrained} for an example layout.

We first show that $| v_{0}s | < | uw |$:
By applying Lemma~\ref{lem:lemma7} to $u$, $s$, and $v_0$, we obtain that 
$| v_{0}s | \le | us | - (\cos(\frac{\theta}{2} + \gamma) - \sin\theta) \cdot| uv_{0} |$. Using Lemma~\ref{lem:lemma1}, this implies that $| v_{0}s | < | us |$. Finally, using the fact that $us$ is contained in $uw$, we conclude that $| v_{0}s | < | uw |$.

Next, since $sw$ is contained in $uw$, it follows that $| sw | < | uw |$.

Finally, we argue that $| v_{0}w | < | uw |$:
By applying Lemma~\ref{lem:lemma7} to $u$, $w$, and $v_0$, we obtain that $| v_{0}w | \le | uw | - (\cos(\frac{\theta}{2} + \gamma) - \sin\theta) \cdot | uv_{0} | $. We can then apply Lemma~\ref{lem:lemma1} to obtain that
$| v_{0}w | < | uw |$.
\end{proof}

We are now ready to prove that the constrained sweeping line graph is a spanner of the visibility graph. 

\begin{theorem}
\label{theo:constrained}
Let $k \geq 7$ be an integer, let $\theta = \frac{2\pi}{k}$, and let $\gamma \in [0, \frac{\pi - 3\theta}{2})$. Let $u$ and $w$ be two distinct vertices in the plane that can see each other. There exists a path connecting $u$ and $w$ in the constrained sweeping line graph of length at most $\frac{1}{\cos(\frac{\theta}{2} + \gamma) - \sin(\theta)} \cdot | uw |$.
\end{theorem}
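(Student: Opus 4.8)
The plan is to prove the statement by strong induction on the Euclidean distance $|uw|$, which is well-founded because $P$ is finite and so $|uw|$ ranges over finitely many values. Write $\mu = \cos(\frac{\theta}{2}+\gamma) - \sin\theta$, so that $t = 1/\mu$; by Lemma~\ref{lem:lemma1} we have $\mu > 0$, and since $\mu \le 1$ also $t \ge 1$. Let $C$ be the cone of $\zeta_k$ with $w \in C_u$, and let $v_0$ be the closest visible vertex in the subcone of $C_u$ containing $w$, so that $uv_0$ is an edge of the constrained sweeping line graph. If $v_0 = w$, then $uw$ itself is an edge and the single-edge path has length $|uw| \le t|uw|$; this non-recursive case also serves as the base of the induction. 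We may therefore assume $v_0 \neq w$ and that the claim holds for every visible pair at distance strictly less than $|uw|$.

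The central step is to route from $v_0$ to $w$ using a convex chain. Let $\ell$ be the line through $uw$ and let $s$ be the intersection of $\ell$ with the sweeping line through $v_0$. By Lemma~\ref{lem:segmentLengths}, $s$ lies on segment $uw$, so $|us| + |sw| = |uw|$, and $|v_0 s|, |sw|, |v_0 w|$ are all strictly smaller than $|uw|$. I would then apply the convex chain lemma (Lemma~\ref{lem:convexChainConstrained}) to triangle $v_0 s w$ with apex $s$: the side $v_0 s$ is a visibility edge by Lemma~\ref{lem:visibilityEdgeConstrained}, the side $sw$ is a subsegment of the visibility edge $uw$, and, crucially, $s$ is not a vertex of $P$ and hence is not the endpoint of any constraint, so the hypothesis of Lemma~\ref{lem:convexChainConstrained} on the apex holds automatically. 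This is the conceptual heart of the argument: introducing the auxiliary point $s$ avoids the difficulty that $u$ itself might be a constraint endpoint, which could otherwise obstruct a direct application to triangle $uv_0w$. The lemma provides a convex chain $v_0 = c_0, c_1, \dots, c_m = w$ of visibility edges between vertices of $P$, contained in triangle $v_0 s w$.

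It remains to assemble and bound the path. Each chain edge $c_j c_{j+1}$ is a visibility edge contained in triangle $v_0 s w$, whose diameter equals its longest side and so is strictly less than $|uw|$ by the inequalities above; thus $|c_j c_{j+1}| < |uw|$, and the induction hypothesis yields a path from $c_j$ to $c_{j+1}$ of length at most $t\,|c_j c_{j+1}|$. Prepending the edge $uv_0$ gives a path from $u$ to $w$ of length at most $|uv_0| + t\sum_{j} |c_j c_{j+1}|$. Since the chain is convex and enclosed by the sides $v_0 s$ and $sw$, its length satisfies $\sum_j |c_j c_{j+1}| \le |v_0 s| + |sw|$. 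Combining this with the bound $|v_0 s| \le |us| - \mu\,|uv_0|$ from Lemma~\ref{lem:lemma7} (applied to $u$, $s$, $v_0$, exactly as in Lemma~\ref{lem:segmentLengths}), the identity $|us| + |sw| = |uw|$, and $t\mu = 1$, the estimate telescopes:
\[
|uv_0| + t\bigl(|v_0 s| + |sw|\bigr) \le |uv_0| + t\bigl(|uw| - \mu\,|uv_0|\bigr) = t\,|uw|.
\]
I expect the main obstacle to be the correct setup of the convex chain lemma, namely selecting triangle $v_0 s w$ with the non-vertex apex $s$ so its hypotheses hold unconditionally, and confirming that the convexity bound on the chain length combines with the Lemma~\ref{lem:lemma7} inequality to produce exactly the factor $t$; the surrounding trigonometric and length estimates are routine given the auxiliary lemmas.
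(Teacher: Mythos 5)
Your proposal is correct and takes essentially the same route as the paper's proof: the same choice of the closest visible vertex $v_0$, the same auxiliary point $s$ on $uw$, the same application of Lemmas~\ref{lem:visibilityEdgeConstrained}, \ref{lem:convexChainConstrained}, and~\ref{lem:segmentLengths} to obtain a convex chain inside $\triangle v_0 s w$ whose edges admit the induction hypothesis, and the same telescoping estimate via Lemma~\ref{lem:lemma7}. The only cosmetic difference is structural: the paper runs the induction on the rank of visible pairs with an explicit base case (showing by contradiction that the closest visible pair must be joined by an edge), whereas you fold that into a strong induction through the case $v_0 = w$; both are sound.
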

\begin{proof}
Let $C$ be the cone of $\zeta_{k}$ such that $w \in$ $C_{u}$.
We prove the theorem by induction on the rank of the pairs of vertices that can see each other, based on the Euclidean distance between them.
Our inductive hypothesis is the following:
$\delta(u,w) \le \frac{1}{\cos(\frac{\theta}{2} + \gamma) - \sin\theta} | uw |$, where $u$ and $w$ are two distinct vertices that can see each other and $\delta(u,w)$ is the length of the shortest path between them in the constrained sweeping line graph.

\emph{Base case:}
In this case, $u$ and $w$ are the Euclidean closest visible pair. If there exists an edge between $u$ and $w$, then $\delta(u, w) = | uw | \le \frac{1}{\cos(\frac{\theta}{2} + \gamma) - \sin\theta} | uw |$, so the induction hypothesis holds. 

It remains to show that, indeed, there exists an edge between the Euclidean closest visible pair. We prove this using contradiction. Assume that there is no edge between $u$ and $w$. Then there must exist a vertex $v_{0}$ in the subcone $C^j_u$ that contains $w$ that has an edge to $u$ in the constrained sweeping line graph. Let $\ell$ be the line through $uw$. Let $s$ be the intersection of $\ell$ and the sweeping line through $v_{0}$. Note that $sw$ is a visibility edge, due to $uw$ being a visibility edge, and $v_{0}s$ is a visibility edge, by Lemma~\ref{lem:visibilityEdgeConstrained}. Therefore, by applying Lemma~\ref{lem:convexChainConstrained}, there exists a convex chain $v_{0}, v_{1}, ..., v_{m-1}, v_{m} = w$ of visibility edges from $v_{0}$ to $w$ inside $\triangle v_0 s w$.

By applying Lemma~\ref{lem:segmentLengths} using $u$, $w$, and $v_{0}$, we infer that all sides of $\triangle v_0 s w$ have length less than the Euclidean distance between $u$ and $w$. Since the convex chain is contained in this triangle, it follows that any pair of consecutive vertices along it has a smaller Euclidean distance than the Euclidean distance between $u$ and $w$. This contradicts that $uw$ is the closest Euclidean pair of visible vertices.

\emph{Induction step:}
We assume that the induction hypothesis holds for all pairs of vertices that can see each other and whose Euclidean distance is smaller than the Euclidean distance between $u$ and $w$.

If $uw$ is an edge in the constrained sweeping line graph, the induction hypothesis follows by the same argument as in the base case. If there is no edge between $u$ and $w$, let $v_{0}$ be the closest visible vertex to $u$ (using the sweeping line) in the subcone $C^j_u$ that contains $w$. By construction, $(u,v_0)$ is an edge of the graph. Let $\ell$ be the line passing through $u$ and $w$. Let $s$ be the intersection of $\ell$ and the sweeping line through $v_{0}$ (see Figure~\ref{fig:constrained}). By definition, $\delta(u,w) \le | uv_{0} | + \delta(v_{0}, w)$.

We know that $sw$ is a visibility edge, since $uw$ is a visibility edge, and we know $v_{0}s$ is a visibility edge by Lemma~\ref{lem:visibilityEdgeConstrained}. Therefore, by Lemma~\ref{lem:convexChainConstrained} there exists a convex chain $v_{0},...,v_{m} = w$ of visibility edges inside $\triangle v_0 s w$ connecting $v_{0}$ and $w$. Applying Lemma~\ref{lem:segmentLengths} to the points $u$, $v_{0}$, and $w$, we infer that each side of $\triangle v_{0}sw$ has length smaller than $|uw|$. Therefore, we can apply induction to every visibility edge along the convex chain from $v_0$ to $w$, as each has length smaller than $|uw|$. Therefore, 
\begin{align*}
\delta(u,w) 
&\le | uv_{0} | + \sum_{i = 0}^{m-1} \delta(v_{i},v_{i + 1})  \\
&\le | uv_{0} | + \frac{1}{\cos(\frac{\theta}{2} + \gamma) - \sin\theta}\sum_{i = 0}^{m-1} | v_{i}v_{i + 1} | &\text{ (by induction hypothesis)}\\
&\le | uv_{0} | + \frac{1}{\cos(\frac{\theta}{2} + \gamma) - \sin\theta} (| v_{0}s | + | sw |) &\text{ (since the chain is convex)} 
\end{align*}

Finally, we apply Lemma~\ref{lem:lemma7}, using $r = v_{0}$, $q = s$, and $p = u$. This gives us that $| v_{0}s | \le | us | - | uv_{0} | \left(\cos\left(\frac{\theta}{2} + \gamma\right) - \sin \theta\right)$, which can be rewritten to $| uv_{0} | + | v_{0}s |/(\cos(\frac{\theta}{2} + \gamma) - \sin \theta) \le | us |/(\cos(\frac{\theta}{2} + \gamma) - \sin \theta)$.
By adding $| sw |/(\cos(\frac{\theta}{2} + \gamma) - \sin\theta)$ to both sides and the fact that $|us| + |sw| = |uw|$, we obtain:
\begin{align*}
| uv_{0} | + \frac{1}{\cos(\frac{\theta}{2} + \gamma) - \sin \theta}(| v_{0}s | + | sw |)
&\le \frac{1}{\cos(\frac{\theta}{2} + \gamma) - \sin \theta} (| us | + | sw |)\\
&= \frac{1}{\cos(\frac{\theta}{2} + \gamma) - \sin \theta} |uw|.
\end{align*}

Hence, we conclude that $\delta(u,w) \le \frac{1}{\cos(\frac{\theta}{2} + \gamma) - \sin \theta} | uw |$.
\end{proof}

We note that the above theorem implies that the constrained sweeping line graph also provides a path with bounded spanning ratio with respect to the shortest path in Vis$(P,S)$ for \emph{every} pair of points, since every pair of consecutive vertices along this shortest path is by definition visible. 

\section{Polygonal Obstacles}
Finally, we generalize the result from the previous section to more complex obstacles. Recall that in this setting $S$ is a finite set of polygonal obstacles, where each corner of each obstacle is a point in $P$, such that no two obstacles intersect, and that each point is part of at most one polygonal obstacle and occurs at most once along its boundary. No obstacle has any points of $P$ in its interior. 

In a nutshell, the defintions used in this section are highly similar to those of the previous section: As in the constrained setting, the line segment between two visible vertices is called a \emph{visibility edge} and the \emph{visibility graph} of a point set $P$ and a set of polygonal obstacles $S$ is the complete graph on $P$ excluding all the edges that properly intersect some obstacle. Cones that are split are considered to be subcones of the original cone. Note that since $S$ consists of vertex-disjoint simple polygons, a cone can be split into at most two subcones. By focusing on the subcones, the polygonal-constrained sweeping line graph is defined analogously to the constrained sweeping line graph: for each subcone $C^{j}_u$ of each vertex $u$, we add an undirected edge between $u$ and the closest vertex in that subcone that can see $u$, where the distance is measured along the left side of the original cone of $u$ (not the left side of the subcone).

For completeness, we also include the full formal definition. Let vertex $u$ be a corner of an obstacle $o$ and let $y$ and $z$ be the corners preceding and following $u$ on the cyclic order of the corners of $o$. Let $C$ be the cone of $\zeta_{k}$ such that $y \in C_{u}$ and let $C'$ be the cone of $\zeta_{k}$ such that $z \in C'_{u}$. If $C=C'$, obstacle $o$ splits $C_u$ into two \emph{subcones}, one clockwise from the obstacle and one counterclockwise from the obstacle (see Figure~\ref{fig:polygonalConstrainedCones}a). Since each vertex $u$ can be a corner of at most one obstacle, only one of its cones can be split this way. We note that if $C \neq C'$, then this simply means that the visible region in those two cones is narrower than those of the other cones, but they are otherwise treated the same way as the other cones (see Figure~\ref{fig:polygonalConstrainedCones}b). For ease of exposition, when a cone $C_u$ is not split, we consider $C_u$ itself to be a single subcone. We use $C^{j}_u$ to denote the $j$-th subcone of $C_u$.

\begin{figure}[h]
\includegraphics{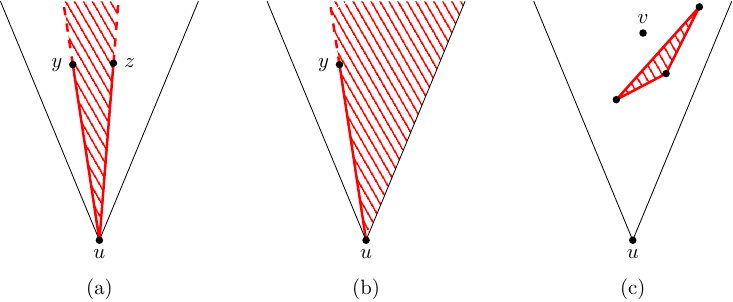}
\centering
\caption{The different ways in which a polygonal obstacle can affect the construction: (a) a cone is split by the obstacle, (b) the visible region in a cone is made narrower by an obstacle, (c) vertex $u$ and $v$ cannot see each other because of the obstacle, but the obstacle does not create subcones for $u$, as $u$ is not one of its corners.}
\label{fig:polygonalConstrainedCones}
\end{figure}

Recall that for any vertex $x$ in a cone, we defined $x_{\gamma}$ to be the intersection of the left-side of the cone and the sweeping line through $x$. We define the polygonal-constrained sweeping line graph (see Figure~\ref{fig:defnPolygonalConstrained}):
\begin{definition}[Polygonal-constrained sweeping line graph] Given a set of points $P$ in the plane, a non-intersecting set $S$ of simple polygonal obstacles whose corners are points in $P$, an integer $k \ge 7$, $\theta = \frac{2\pi}{k}$, and $\gamma \in [0, \frac{\pi - 3\theta}{2})$. The polygonal-constrained sweeping line graph is defined as follows:
\begin{enumerate}
    \item The vertices of the graph are the points of $P$.
    \item For each vertex $u$ of $P$ and for each \emph{subcone} $C^j_u$ that contains one or more vertices of $P \setminus \{u\}$ visible to $u$, the spanner contains the undirected edge $(u, r)$, where $r$ is the vertex in $C^j_u \cap P \setminus \{u\}$, which is visible to $u$ and minimizes $|ur_{\gamma}|$. This vertex $r$ is referred to as the \emph{closest} visible vertex in this subcone of $u$.
\end{enumerate}
\end{definition}

\begin{figure}[h]
\includegraphics{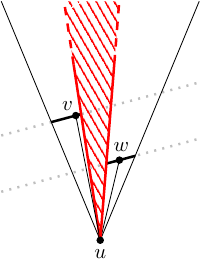}
\centering
\caption{The edges in a cone of the polygonal-constrained sweeping line graph. The red area represents (part of) an obstacle. The sweeping line of a subcone is a thick black segment inside the subcone and grey dotted outside, as vertices outside the subcone as ignored.}
\label{fig:defnPolygonalConstrained}
\end{figure}

We now introduce modifications to Lemmas~\ref{lem:convexChainConstrained} and ~\ref{lem:visibilityEdgeConstrained} to make them suited for polygonal obstacles. A proof of Lemma~\ref{lem:convexChainPolygonal} can be found in \cite{polygonallemma}.
\begin{lemma}[\cite{polygonallemma}]
\label{lem:convexChainPolygonal}
Let $u$, $v$, and $w$ be three points where $(w,u)$ and $(u,v)$ are both visibility edges and $u$ is not a vertex of any polygonal obstacle $P$ where the open polygon $P'$ of $P$ intersects $\triangle wuv$. The area $A$, bounded by $(w,u)$, $(u,v)$, and a convex chain formed by visibility edges between $w$ and $v$ inside $\triangle wuv$, does not contain any vertices and is not intersected by any obstacles.
\end{lemma}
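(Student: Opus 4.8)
The plan is to mirror the structure of the line-segment version, Lemma~\ref{lem:convexChainConstrained}, and prove the statement by induction on a complexity measure of the triangle: the number of vertices of $P$ lying strictly inside $\triangle wuv$ together with the number of obstacles whose open polygon meets the interior of $\triangle wuv$. Throughout, I would build the chain so that it is convex \emph{as seen from the apex $u$}, i.e. it bulges toward $u$. This is the property that, in the intended application, forces its total length to be at most $|uw| + |uv|$ (exactly as convex chains are used in Theorem~\ref{theo:constrained}), and it is what makes the enclosed region $A$, the part of the triangle adjacent to $u$, the natural candidate for being vertex- and obstacle-free.

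For the \textbf{base case}, suppose no vertex of $P$ lies strictly inside $\triangle wuv$ and no obstacle's open polygon meets its interior. Then the segment $wv$ lies in the triangle and properly intersects no obstacle, so it is a visibility edge; the chain is the single edge $wv$, and $A = \triangle wuv$ trivially contains no vertex and meets no obstacle. For the \textbf{inductive step}, some obstruction pokes into the triangle, and I would first locate a suitable \emph{split vertex} $p$: among all obstacle corners inside $\triangle wuv$ that are visible from $u$, choose the one that is extreme in the angular order around $u$ (equivalently, a vertex of the convex hull, taken from $u$'s viewpoint, of the obstructions clipped to the triangle). Since $p$ is visible from $u$, $up$ is a visibility edge, and the two sub-triangles $\triangle wup$ and $\triangle puv$ each satisfy the hypotheses of the lemma: $(w,u),(u,p)$ and $(p,u),(u,v)$ are visibility edges, and because both sub-triangles are contained in $\triangle wuv$ and the apex $u$ is unchanged, the condition that $u$ is not a corner of an obstacle whose open polygon meets the sub-triangle is inherited. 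Each sub-triangle has strictly smaller complexity, so induction yields convex sub-chains from $w$ to $p$ and from $p$ to $v$; concatenating them at $p$ gives the desired chain, and choosing $p$ angularly extreme around $u$ is exactly what guarantees that the turn at $p$ preserves convexity toward $u$.

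The genuinely new ingredient compared with the line-segment setting is that an obstacle has two-dimensional extent, so the chain may have to hug a whole sequence of obstacle-boundary edges around a reflex portion of an intruding polygon rather than turn at a single constraint endpoint. I would handle this by using the fact that every boundary edge of an obstacle is itself a visibility edge, so a maximal run of consecutive boundary edges of one obstacle facing $u$ is an admissible convex sub-chain between two of its corners, and the split-vertex argument is then applied at the first and last corners of such a run. The main obstacle in the proof is to verify, once all sub-chains are assembled, two global invariants simultaneously: that the concatenated chain is convex as seen from $u$ (which can fail at a junction unless every split is made at an angularly extreme corner), and that the enclosed region $A$ is free of vertices and obstacles (which relies on the hypothesis that $u$ is not a corner of any obstacle meeting the triangle, ensuring a free neighborhood of $u$ so that the chain never has to route around $u$ itself). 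Establishing these invariants---especially ruling out the degenerate case of an obstacle edge slicing across the triangle with both of its corners outside it---is where the careful case analysis of~\cite{polygonallemma} is required; the remaining bookkeeping is routine and parallels Lemma~\ref{lem:convexChainConstrained}.
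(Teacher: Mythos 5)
First, a remark on what you are being compared against: the paper does not prove Lemma~\ref{lem:convexChainPolygonal} at all --- it only states it and cites~\cite{polygonallemma} for the proof. Your argument therefore has to stand entirely on its own, and its central step does not hold. The claim that splitting at an obstacle corner $p$ that is \emph{angularly extreme around $u$} guarantees that the two inductively obtained chains concatenate into a \emph{convex} chain is false, as is the parenthetical claim that angular extremity is ``equivalently'' being a vertex of the convex hull of the obstructions seen from $u$. Concretely, take $u=(0,10)$, $w=(-5,0)$, $v=(5,0)$; let $p=(-4,1.8)$ be a corner of a small obstacle hanging just below and to the right of it, so that $p$ is visible from $u$ and is the angularly most extreme visible corner (its angle from the ray $uw$ is about $0.6^\circ$); and let a thin obstacle hang down from the corner $x=(-3,3.8)$ across the segment $pv$. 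Note that $x$ lies $0.05$ below the segment $up$, so $x$ is angularly \emph{less} extreme than $p$. Your rule splits at $p$; the sliver $\triangle wup$ is empty, so the recursion returns the edge $wp$, and inside $\triangle puv$ it splits at $x$, returning $px$ and $xv$. The resulting chain is $w \to p \to x \to v$, whose turn at $p$ is a left turn (the cross product of $(1,1.8)$ and $(1,2)$ is $+0.2$) while the turn at $x$ is a right turn: the chain is not convex, and the length bound that convexity is meant to deliver is not established. The correct chain here is $w \to x \to v$, which avoids $p$ altogether: $p$ lies below the segment $wx$, i.e., $p$ is \emph{not} on the $u$-facing hull even though it is the angularly extreme visible corner. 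Extremity in angle around the apex is simply the wrong notion of extremity.

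There are two further gaps. Your induction can only split at obstacle corners, but the lemma requires the region $A$ to contain no vertices at all: if $\triangle wuv$ contains vertices of $P$ that are corners of no obstacle, and no obstacle meets the triangle, then neither your base case nor your inductive step applies, and taking the chain $wv$ (which is then a visibility edge) would leave those vertices inside $A$. Conversely, the ``degenerate case'' you defer to~\cite{polygonallemma} cannot occur: an obstacle edge is a single straight segment, it cannot properly cross the visibility edges $uw$ and $uv$, and it can cross the segment $wv$ at most once, so every obstacle meeting the interior of $\triangle wuv$ has a corner strictly inside it. You have thus outsourced a vacuous case to the citation, while the genuinely delicate step rests on a false equivalence. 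Both problems disappear if you instead take the chain to be the $u$-facing part of the boundary of the convex hull of $\{w,v\}$ together with \emph{all} vertices of $P$ inside $\triangle wuv$: convexity is automatic; $A$ contains no vertices by the hull property; every obstacle's intersection with the triangle lies inside the hull (the intersection of any obstacle edge with the triangle is a segment whose endpoints are corners inside the triangle or points of $wv$, all of which lie in the hull, and a segment with both endpoints in a convex set stays in it), so $A$ is obstacle-free and every hull edge is a visibility edge. If you insist on a recursion, the split vertex must be the vertex of $P$ in $\triangle wuv$ farthest from the line through $wv$: since distance to that line varies monotonically along any straight obstacle edge, no obstacle edge can cross the segment from $u$ to this vertex, so it is visible from $u$; and since both of its inductively produced neighbours on the chain are strictly lower, the junction turn is then forced to be convex.
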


\begin{lemma}
\label{lem:visibilityEdgePolygonal}
Let $u$ and $w$ be two distinct vertices in the polygonal-constrained sweeping line graph such that $uw$ is a visibility edge and let $C$ be the cone of $\zeta_{k}$ such that $w \in C_{u}$. Let $v_{0}$ be the closest visible vertex in the subcone of $C_{u}$ that contains $w$. Let $\ell$ be the line through $u$ and $w$. Let $s$ be the intersection of $\ell$ and the sweeping line through $v_{0}$. Then $v_{0}s$ is a visibility edge.
\end{lemma}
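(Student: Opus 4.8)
The plan is to adapt the contradiction argument of Lemma~\ref{lem:visibilityEdgeConstrained} to the polygonal setting, the only essential change being that a single constraint segment is replaced by the boundary of a polygonal obstacle, which forces us to argue visibility more carefully. I would assume, for contradiction, that $v_0 s$ is not a visibility edge, so that some obstacle $O$ properly intersects the open segment $v_0 s$; in particular some edge of $O$ crosses $v_0 s$. Throughout I would rely on two facts: $u v_0$ is a visibility edge, since $v_0$ is visible to $u$, and $u s$ is a visibility edge, since it is contained in the visibility edge $u w$. Consequently no obstacle can cross these two sides of $\triangle u v_0 s$.

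First I would show that some corner of $O$ lies inside $\triangle u v_0 s$. Let $e$ be an edge of $O$ that crosses the open segment $v_0 s$. Since $e$ is a straight segment and general position excludes collinearity with $v_0 s$, it meets the supporting line of $v_0 s$ exactly once; tracing $e$ from the crossing point toward its endpoint on the side of that line containing $u$, the edge stays inside $\triangle u v_0 s$, because it can neither cross the visibility edges $u v_0$ and $u s$ nor recross $v_0 s$. Hence that endpoint, which is a corner of $O$ and therefore a vertex of $P$, lies inside the triangle, so the set of obstacle corners inside $\triangle u v_0 s$ is nonempty.

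The main obstacle is to extract from this set a corner that is genuinely visible to $u$; this is exactly where polygons behave differently from the single segments of the constrained proof. I would order the obstacle corners lying inside $\triangle u v_0 s$ by the order in which the sweeping line emanating from $u$ reaches them (equivalently, by the closeness measure $|u r_{\gamma}|$ from the definition of the graph) and let $c$ be the first one encountered. If $c$ were not visible to $u$, then the chord $u c$, which lies inside the triangle, would be crossed by some obstacle edge $e'$ at a point between $u$ and $c$; the endpoint of $e'$ that the sweeping line reaches first is reached no later than this crossing point, hence strictly before $c$, and it too lies inside the triangle by the same tracing argument as above, contradicting the choice of $c$. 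Therefore $c$ is visible to $u$.

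Finally I would derive the contradiction. The whole triangle $\triangle u v_0 s$ is contained in the subcone of $C_u$ that contains $w$: this subcone is convex, since its aperture is at most $\theta < \pi$, and it contains $u$, $v_0$, and $s$ (the last because $s$ lies on the segment $u w$). Thus $c$ is a vertex of $P$ inside this subcone that is visible to $u$; moreover $c$ lies strictly on the $u$-side of the sweeping line through $v_0$, so the sweeping line from $u$ reaches $c$ before $v_0$, i.e.\ $c$ is closer to $u$ than $v_0$ in the graph's closeness measure. This contradicts the fact that $v_0$ is the closest visible vertex in that subcone, so $v_0 s$ must be a visibility edge.
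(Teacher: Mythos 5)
Your proof is correct and takes essentially the same approach as the paper, which proves this lemma simply by declaring it analogous to Lemma~\ref{lem:visibilityEdgeConstrained}: assume an obstacle crosses $v_0s$, extract a vertex inside $\triangle uv_0s$ that is visible to $u$ and reached by the sweeping line before $v_0$, and contradict the choice of $v_0$. Your write-up is in fact more careful than the paper's, since you explicitly justify the steps the paper leaves implicit --- that a crossing polygon edge forces a corner of the obstacle inside the triangle, that the first such corner in sweep order is genuinely visible to $u$, and that the triangle lies in the relevant subcone by convexity.
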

\begin{proof}
This proof is analogous to the proof of Lemma~\ref{lem:visibilityEdgeConstrained}.
\end{proof}

Using these two modified lemmas, we can prove that the polygonal-constrained sweeping line graph is a spanner of the visibility graph. 

\begin{theorem}
Let $k \geq 7$ be an integer, let $\theta = \frac{2\pi}{k}$, and let $\gamma \in [0, \frac{\pi - 3\theta}{2})$. Let $u$ and $w$ be two distinct vertices in the plane that can see each other. There exists a path connecting $u$ and $w$ in the polygonal-constrained sweeping line graph of length at most
$\frac{1}{\cos(\frac{\theta}{2} + \gamma) - \sin(\theta)} \cdot | uw |$.
\end{theorem}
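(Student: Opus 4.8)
The plan is to mirror the proof of Theorem~\ref{theo:constrained} essentially verbatim, substituting the two polygonal-setting lemmas for their constrained counterparts and observing that every remaining ingredient is obstacle-agnostic. I would proceed by induction on the rank of the pairs of mutually visible vertices ordered by Euclidean distance, using the same inductive hypothesis $\delta(u,w) \le \frac{1}{\cos(\frac{\theta}{2}+\gamma)-\sin\theta}\,|uw|$, where $\delta$ denotes shortest-path length in the polygonal-constrained sweeping line graph.

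For the base case, $u$ and $w$ are the Euclidean-closest visible pair. If the edge $uw$ is present we are immediately done; otherwise I would derive a contradiction. If $uw$ is absent, there is a closest visible vertex $v_0$ in the subcone $C^j_u$ containing $w$, and setting $s$ to be the intersection of the line $\ell$ through $uw$ with the sweeping line through $v_0$, Lemma~\ref{lem:visibilityEdgePolygonal} guarantees that $v_0 s$ is a visibility edge, while $sw$ is a subsegment of the visibility edge $uw$. Lemma~\ref{lem:convexChainPolygonal} then yields a convex chain of visibility edges from $v_0$ to $w$ inside $\triangle v_0 s w$, and Lemma~\ref{lem:segmentLengths} — whose proof invokes only the purely geometric Lemmas~\ref{lem:lemma7} and~\ref{lem:lemma1} and therefore transfers unchanged — shows every side of $\triangle v_0 s w$ is shorter than $|uw|$. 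Hence any two consecutive chain vertices are closer to each other than $u$ and $w$ are, contradicting minimality.

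For the induction step I would again split on whether $uw$ is an edge. If it is, the hypothesis follows as in the base case. Otherwise I take the closest visible $v_0$ in the subcone of $w$, define $s$ as above, use Lemma~\ref{lem:visibilityEdgePolygonal} and Lemma~\ref{lem:convexChainPolygonal} to obtain a convex chain $v_0,\dots,v_m=w$, and Lemma~\ref{lem:segmentLengths} to certify that each chain edge is shorter than $|uw|$, licensing the inductive hypothesis on each. The convexity of the chain bounds $\sum_i |v_i v_{i+1}|$ by $|v_0 s| + |sw|$, after which a single application of Lemma~\ref{lem:lemma7} with $r=v_0$, $q=s$, and $p=u$ collapses $|uv_0| + \frac{1}{\cos(\frac{\theta}{2}+\gamma)-\sin\theta}(|v_0 s| + |sw|)$ into $\frac{1}{\cos(\frac{\theta}{2}+\gamma)-\sin\theta}\,|uw|$, exactly as in the constrained proof.

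The one place demanding genuine care — and the step I expect to be the main obstacle — is verifying that the precondition of Lemma~\ref{lem:convexChainPolygonal} is met in our application. That lemma requires the apex vertex at which its two visibility edges meet (here the point $s$, which plays the role of the middle vertex $u$ in the lemma's statement) not to be a corner of any obstacle whose open polygon meets the triangle. Since $s$ is the intersection of two lines and is not a vertex of $P$, it is a fortiori not an obstacle corner, so this hypothesis holds vacuously; I would state this explicitly to make the substitution of Lemma~\ref{lem:convexChainPolygonal} for Lemma~\ref{lem:convexChainConstrained} rigorous. Beyond this observation, the argument is word-for-word identical to that of Theorem~\ref{theo:constrained}, which is precisely the generality the paper advertises.
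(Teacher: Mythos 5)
Your proposal is correct and takes exactly the paper's approach: the paper's own proof of this theorem consists of observing that the argument for Theorem~\ref{theo:constrained} goes through verbatim once Lemmas~\ref{lem:convexChainConstrained} and~\ref{lem:visibilityEdgeConstrained} are replaced by Lemmas~\ref{lem:convexChainPolygonal} and~\ref{lem:visibilityEdgePolygonal}, since all remaining steps depend only on Euclidean distances. Your explicit check that the apex of the two visibility edges in Lemma~\ref{lem:convexChainPolygonal} is the point $s$, which is not a vertex of $P$ and hence not an obstacle corner (so the lemma's precondition holds vacuously), is a detail the paper leaves implicit, and stating it makes the substitution rigorous.
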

\begin{proof}
The proof is analogous to the proof of Theorem~\ref{theo:constrained}. The only changes required are that the uses of Lemma~\ref{lem:convexChainConstrained} and Lemma~\ref{lem:visibilityEdgeConstrained} are replaced with Lemma~\ref{lem:convexChainPolygonal} and Lemma~\ref{lem:visibilityEdgePolygonal} respectively. Note that all other arguments still hold, as they are arguments based on Euclidean distance, rather than the specific shape of the (straight-line) obstacles.
\end{proof}

We note that, like the constrained sweeping line graph in the previous section, the above theorem implies that the polygonal-constrained sweeping line graph also provides a path with bounded spanning ratio with respect to the shortest path in Vis$(P,S)$ for \emph{every} pair of points.

\section{Conclusion}
We showed that the sweeping line construction produces a spanner in the unconstrained, constrained, and polygonal obstacle settings.
These graphs are a generalization of $\Theta$-graphs and thus we also showed that every $\Theta$-graph with at least 7 cones is a spanner in the presence of polygonal obstacles.
We also note that the proof in the unconstrained case immediately implied a local routing algorithm with competitive ratio equal to (the current upper bound of) the spanning ratio. 

Our proofs rely on Lemma~\ref{lem:lemma7}, which bounds the length of the inductive part of our path. We conjecture that any proof strategy that uses induction must satisfy a condition similar to Lemma~\ref{lem:lemma7} in order to upper bound the spanning ratio. An analogous argument could then be applied to prove the construction to be a spanner in all three settings using the methods from this paper (i.e., finding a vertex $v_0$ satisfying the conditions of Lemmas~\ref{lem:visibilityEdgeConstrained} and~\ref{lem:segmentLengths}). This would greatly simplify spanner construction for the constrained and polygonal obstacle settings, by putting the focus on the simpler unconstrained setting. In particular, we conjecture that the strategy described in this paper can be applied to generalize the known results for Yao-graphs. 

Other open problems include improving the spanning ratios presented in this paper or showing that this is not possible by providing matching lower bounds. Another direction for future work is extending the results to include sweeping line graphs with 4, 5, or 6 cones. Since inductive arguments do not (easily) apply when there are fewer than 7 cones, giving a general approach as done in this paper will be challenging. 

\bibliography{references}

\end{document}